\def\ifshort{\ifx} 
\newcommand{\highlight}[1]{#1}
\newcommand{\figref}{Figure~}
\newcommand{\lang}{\L}
\newcommand{\Parity}[1][]{{\mathsf{parity}_{#1}}}
\newcommand{\Mean}[1][]{\mathsf{mean}_{#1}}
\newcommand{\MP}[1][]{\mathsf{mp}_{#1}}
\newcommand{\Max}[1]{\mathsf{max}_{#1}}
\newcommand{\V}[3]{{\cal V}_{#1}^{#2}(#3)}
\def\fsize{\footnotesize}
\def\g{g}
\def\r{r}
\def\ng{{\bar{g}}}
\def\nr{{\bar{r}}}
\def\sstate{q}
\def\Sstate{Q}
\def\run{\rho}
\newcommand{\atrans}{\Delta}
\def\A{\Sigma}
\def\a{\sigma}
\def\w{w}
\def\L{L\xspace}
\def\d{d\xspace}
\def\D{{\cal D}\xspace}
\def\reward{r}
\def\maps{\rightarrow}
\def\I{I}
\renewcommand\O{O}
\def\outcome{{\cal O}}
\newcommand{\tuple}[1]{(#1)}
\renewcommand{\epsilon}{\varepsilon}
\def\cale{{\cal E}}
\newcommand{\pat}{\omega}
\newcommand{\Paths}{\Omega}
\newcommand{\fpath}{v}
\newcommand{\PA}{1}
\newcommand{\straa}{\pi}
\newcommand{\Straa}{\Pi}
\newcommand{\SA}{S_1}
\newcommand{\SR}{S_{P}}
\newcommand{\ovSA}{\ov{S}_1}
\newcommand{\ovSR}{\ov{S}_P}
\newcommand{\gamegraph}{G}
\newcommand{\va}{\mathsf{V}}
\newcommand{\Prb}{\mu}
\newcommand{\Inf}{\mathrm{Inf}}
\newcommand{\Exp}{\mathbb{E}}
\newcommand{\nats}{\mathbb{N}}
\newcommand{\reals}{\mathbb{R}}
\newcommand{\set}[1]{\{ #1 \}}
\newcommand{\obciach}{\upharpoonright}
\newcommand{\trans}{\delta}
\newcommand{\distr}{{\cal D}}
\newcommand{\Aa}{{\cal A}}
\newcommand{\ov}{\overline}
\newcommand{\wh}{\widehat}
\newcommand{\quasy}{{\sc Quasy}\xspace}
\definecolor{Gray}{rgb}{0.7,0.7,0.7} 
\newcommand\url[1]{{#1}}
\title{Measuring and Synthesizing Systems in Probabilistic
  Environments\thanks{This research was supported by the European
    Union project COMBEST.}}
\title{Measuring and Synthesizing Systems in Probabilistic Environments}
\author{Krishnendu Chatterjee$^1$ \and Thomas A. Henzinger$^{1,2}$ \and\\
  Barbara Jobstmann$^3$ \and Rohit Singh$^4$}
\institute{Institute of Science and Technology Austria (IST Austria)
\and \'{E}cole Polytechnique F\'{e}d\'{e}ral de Lausanne (EPFL),
Switzerland \and CNRS/Verimag, France \and
Indian Institute of Technology (IIT), Bombay}
\begin{document}
\maketitle

\begin{center}
\emph{Technical Report, April 14, 2011}
\end{center}

\begin{abstract}
Often one has a preference order among the different systems that
satisfy a given specification.  Under a probabilistic assumption about
the possible inputs, such a preference order is naturally expressed by
a weighted automaton, which assigns to each word a value, such that a
system is preferred if it generates a higher expected value.  We solve
the following optimal-synthesis problem: given an omega-regular
specification, a Markov chain that describes the distribution of
inputs, and a weighted automaton that measures how well a system
satisfies the given specification under the given input assumption,
synthesize a system that optimizes the measured value.

For safety specifications and measures that are defined by mean-payoff
automata, the optimal-synthesis problem amounts to finding a strategy
in a Markov decision process (MDP) that is optimal for a long-run
average reward objective, which can be done in polynomial time.  For
general omega-regular specifications, the solution rests on a new,
polynomial-time algorithm for computing optimal strategies in MDPs
with mean-payoff parity objectives.  Our algorithm generates optimal
strategies consisting of two memoryless strategies and a counter.
This counter is in general not bounded. To obtain a finite-state
system, we show how to construct an $\epsilon$-optimal strategy with a
bounded counter for any $\epsilon >0$.  Furthermore, we show how to
decide in polynomial time if we can construct an optimal finite-state
system (i.e., a system without a counter) for a given specification.

We have implemented our approach and the underlying algorithms in a
tool that takes qualitative and quantitative specifications and
automatically constructs a system that satisfies the qualitative
specification and optimizes the quantitative specification, if such a
system exists.  We present some experimental results showing optimal
systems that were automatically generated in this way.
\end{abstract}

\section{Introduction}
Building correct and reliable programs is one of the key challenges in
computer science.  Automatic verification and synthesis aims to
address this problem by defining correctness with respect to a
\emph{formal specification}, a mathematical description of the desired
behavior of the system. In automatic verification, we ask if a given
system satisfies a given
specification~\cite{ClarkeE81,QueilleS82,CousotC77}.
The synthesis problem asks to automatically derived a system from a
specification~\cite{Church62,Ramadg89,Pnueli89}.  Traditionally, the
verification and synthesis problem are studied with respect to Boolean
specifications in an adversarial environment: the Boolean (or
qualitative) specification maps each possible behavior of a system to
true or false indicating if this behavior is a desired behavior or
not.  Analyzing a system in an adversarial environment corresponds to
considering the system under the worst-case behavior of the
environment.
In this work we study the verification and synthesis problem for
quantitative objectives in probabilistic environments, which
corresponds to analyzing the system under the average-case behavior of
its environment.

Quantitative reasoning is traditionally used to measure quantitative
properties of systems, such as performance or reliability
(cf.~\cite{DeAlfa97,Haverk98,BaierK08,KNP09a}).  Quantitative
reasoning has also been shown useful in the classically Boolean
contexts of verification
and synthesis 
\cite{BloemCHJ09,Katz10}.
In particular, by {augmenting} a Boolean specifications with a
quantitative specifications, we can measure how ``well'' a system
satisfies the specification.  For example, among systems that respond
to requests, we may prefer one system over another if it responds
quicker, or it responds to more requests, or it issues fewer
unrequested responses, etc.  In synthesis, we can use such measures to
guide the synthesis process towards deriving a system that is, in the
desired sense, ``optimal'' among all systems that satisfy the
specification \cite{BloemCHJ09}.

There are many ways to define a quantitative measure that captures the
``goodness'' of a system with respect to the Boolean specification,
and particular measures can be quite different, but there are two
questions every such measure has to answer: 
(1)~how to assign a quantitative value to one particular behavior of a
system (measure along a behavior) and 
(2)~how to aggregate the quantitative values that are assigned to the
possible behaviors of the system (measure across behaviors).
Recall the response property.  Suppose there is a sequence of requests
along a behavior and we are interested primarily in response time,
i.e., the quicker the system responds, the better.  As measure (1)
along a particular behavior, we may be interested in an average or the
supremum (i.e., worst case) of all response times, or in any other
function that aggregates all response times along a behavior into a
single real value.  The choice of measure (2) across behaviors is
independent: we may be interested in an average of all values assigned
to individual behaviors, or in the supremum, or again, in some other
function.
In this way, we can choose to measure the average (across behaviors)
of average (along a behavior) response times, or the average of
worst-case response times, or the worst case of average response
times, or the worst case of worst-case response times, etc.  Note that
these are the same two choices that appear in weighted automata and
max-plus algebras (cf.~\cite{Droste09,Gaubert97,Cuning79}).

In previous work, we studied various measures (1) along a behavior.
In particular, lexicographically ordered tuples of averages
\cite{BloemCHJ09} and ratios \cite{BloemGHJ09} are of natural interest
in certain contexts.  Alur et al.~\cite{Alur09} consider an automaton
model with a quantitative measure (1) that is defined with respect to
certain accumulation points along a behavior.
However, in all of these cases, for measure (2) only the worst case
(i.e., supremum) is considered.  This comes natural as an extension of
Boolean thinking, where a system fails to satisfy a property if even a
single behavior violates the property.
But in this way, we cannot distinguish between two systems that have
the same worst cases across behaviors, but in one system almost all
possible behaviors exhibit the worst case, while in the other only
very few behaviors do so.  In contrast, in performance evaluation one
usually considers the average case across different
behaviors. 

For instance, consider a resource controller for two clients.  Clients
send requests, and the controller grants the resource to one of them
at a time.  Suppose we prefer, again, systems where requests are
granted ``as quickly as possible.''  Every controller that avoids
simultaneous grants will have a behavior along which at least one
grant is delayed by one step, namely, the behavior along which both
clients continuously send requests.  The best the controller can do is
to alternate between the clients.  However, if systems are measured
with respect to the worst case across different behaviors, then a
controller that always alternates between both clients, independent of
the actual requests, is as good as a controller that tries to grant
all requests immediately and only alternates when both clients request
the resource at the same time.
Clearly, if we wish to synthesize the preferred controller, we need to
apply an average-case measure across behaviors.

In this paper, we present a measure (2) that averages across all
possible behaviors of a system and solve the corresponding synthesis
problem to derive an optimal system.  In synthesis, the different
possible behaviors of a system are caused by different input
sequences.  Therefore, in order to take a meaningful average across
different behaviors, we need to assume a probability distribution over
the possible input sequences.
For example, if on input 0 a system has response time~$r_0$, and on
input 1 response time~$r_1$, and input 0 is twice as likely as
input~1, then the average response time is $(2r_0+r_1)/3$.

The resulting synthesis problem is as follows: given a Boolean
specification~$\varphi$, a probabilistic input assumption~$\mu$, and a
measure that assigns to each system~$M$ a value $\V{\mu}{\varphi}{M}$
of how ``well'' $M$ satisfies $\varphi$ under~$\mu$, construct a
system $M$ such that $\V{\mu}{\varphi}{M}\geq \V{\mu}{\varphi}{M'}$
for all~$M'$.
We solve this problem for qualitative specifications that are given as
$\omega$-automata, input assumptions that are given as finite Markov
chains, and a quantitative specification given as mean-payoff automata
which defines a quantitative language by assigning values to
behaviors.  From the above three inputs we derive a measure that
captures (1)~an average along system behaviors as well as (2)~an
average across system behaviors; and thus we obtain a measure that
induces a value for each system.

To our knowledge this is the first solution of a synthesis problem for
an average-case measure across system behaviors.  Technically the
solution rests on a new, polynomial-time algorithm for computing
optimal strategies in MDPs with mean-payoff parity objectives.  In
contrast to MDPs with mean-payoff objectives, where pure memoryless
optimal strategies exist, optimal strategies for mean-payoff parity
objectives in MDPs require infinite memory.  It follows from our
result that the infinite memory can be captured with a counter, and
with this insight we develop the polynomial time algorithm for solving
MDPs with mean-payoff parity objectives.
\highlight{A careful analysis of the constructed strategies allows us
  to construct, for any $\epsilon>0$, a finite-state system that is
  within $\epsilon$ of the optimal value.
  Furthermore, we present a polynomial-time procedure to decide if
  there exists a finite-state system (system without a counter) that
  achieves the optimal value for a mean-payoff parity specification.
  We show that for MDPs with mean-payoff parity objectives finite
  memory does not help, i.e., either the optimal strategy requires
  infinite memory or there exists a memoryless strategy that also
  achieves the optimal value.  We give a linear program to check if
  there exists a memoryless strategy that is optimal.  }

\paragraph{\bf Related works}  Many formalisms for
quantitative specifications have been considered in the
literature~\cite{Alur09,CCHKM05,CAHS03,CAFHMS06,CDH08,dA98,DiscountingTheFuture,DrosteGastin07,DrosteKR08,KL07};
most of these works (other than~\cite{Alur09,CDH08,dA98}) do not
consider mean-payoff specifications and none of these works focus on
how quantitative specifications can be used to obtain better
implementations for the synthesis problem.  Furthermore, several
notions of metrics for probabilistic systems and games have been
proposed in the literature~\cite{dAMRS07,DGJP99}; these metrics
provide a measure that indicates how close are two systems with
respect to all temporal properties expressible in a logic; whereas our
work uses quantitative specification to compare systems with respect 
to the property of interest.
Similar in spirit but based on a completely different technique, is
the work by Niebert et al.~\cite{NiebertPP08}, 
who group behaviors
into good and bad with respect to satisfying a given LTL specification
and use a CTL$^*$-like analysis specification to quantify over the
good and bad behaviors. This measure of logical properties was used by
Katz and Peled~\cite{Katz10} to guide genetic algorithms to discover
counterexamples and corrections for distributed protocols.
Control and synthesis in the presence of uncertainty has been
considered in several works such as~\cite{Baier,CY90,BdA95}: in all
these works, the framework consists of MDPs to model nondeterministic
and probabilistic behavior, and the specification is a Boolean
specification.  In contrast to these works where the probabilistic
choice represent uncertainty, in our work the probabilistic choice
represent a model for the environment assumption on the input
sequences that allows us to consider the system as a whole.  Moreover,
we consider quantitative objectives.  Parr and
Russel~\cite{Parr97reinforcementlearning} also synthesize strategies
for MDPs that optimize a quantitative objectives. They optimize with
respect to the expected discounted total reward, while we consider
mean-payoff objectives. Furthermore, we allow the user (i) to provide
additionally qualitative (in particular liveness) constraints and (ii)
to specify the qualitative and the quantitative constraints
independent of the MDP.  MDPs with mean-payoff objectives are well
studied.  The books~\cite{FV96,Puterman} present a detailed analysis
of this topic.  We present a solution to a more general condition: the
Boolean combination of mean-payoff and parity condition on MDPs.  We
show that MDPs with mean-payoff parity objectives can be solved in
polynomial time.

\paragraph{\bf Structure of the paper}
Section~\ref{sec:preliminaries} gives the necessary theoretical
background and fixes the notation.  In Section~\ref{sec:measures} we
introduce the problem of measuring systems with respect to
quantitative specifications using several examples, define our new
measure, and show how to compute the value of a system with respect to
this measure.  In Section~\ref{sec:synthesis} we show how to construct
a system that satisfy a qualitative specification and optimize a
quantitative specification with respect to our new measure.  In
Section~\ref{sec:experiments} we present experimental results and we
conclude in Section~\ref{sec:conclusion}.

This paper is an extended and improved version
of~\cite{ChatterjeeHJS10} that includes new theoretical results, more
examples, detailed proofs, and reports on an improved implementation.
We present new theoretical results related to finite-state strategies
for approximating the values in mean-payoff parity MDPs and a
polynomial-time procedure to decide the existence of memoryless
strategy that achieves the optimal value.

\section{Preliminaries}
\label{sec:preliminaries}
\subsection{Alphabet, Words, and Languages}
An \emph{alphabet} $\A$ consists of a finite set of \emph{letters}
$\a\in\A$. \highlight{We often use letters representing assignments to
  a set of Boolean variables~$V$. In this case we write $\A=2^V$,
  i.e., $\A$ is the set of all subsets of $V$, and a letter
  $\a=\set{v_1,\dots,v_n} \in 2^V$ encodes the unique assignment,
  in which all variables in $\a$ are set to true and all other
  variables are set to false.}
A \emph{word} $\w$ over $\A$ is either a \emph{finite} or
\emph{infinite} sequence of letters, i.e., $\w\in\A^*\cup\A^\omega$.
Given a word $\w\in \A^\omega$, we denote by $\w_i$ the letter at
position $i$ of $\w$ and by $\w^i$ the prefix of $\w$ of length $i$,
i.e., $\w^i=\w_1\w_2\dots\w_i$.  We denote by $|\w|$ the length of the
word $\w$, i.e., $|\w^i|=i$ and $|\w|=\infty$, if $\w$ is infinite.  A
\emph{qualitative language} \L is a subset of $\A^\omega$.  A
\emph{quantitative language}~$\L$ \cite{CDH08} is a mapping from the
set of words to the set of reals, i.e., $\L: \A^\omega \maps \reals$.
Note that the characteristic function of a qualitative language~$\L$
is a quantitative language mapping words to~$0$ and~$1$.  Given a
qualitative language~$\L$, we use $\L$ also to denote its
characteristic function.

\subsection{Automata with Parity, Safety, and Mean-Payoff
  Objective}
An \emph{(finite-state) automaton} is a tuple
$A=\tuple{\A,Q,q_0,\atrans}$, where $\A$ is a \emph{alphabet},
$Q$ is a (finite) set of \emph{states}, $q_0\in Q$ is an \emph{initial
  state}, and $\atrans: Q \times \A \to Q$\footnote{Note
  that our automata are deterministic and complete to simplify the
  presentation.} is a \emph{transition function} that maps a state and
a letter to a successor state. 
The \emph{run of $A$ on a word $\w=\w_0\w_1\dots$} is a sequence of
states $\run=\run_0 \run_1\dots$ such that (i) $\run_0=q_0$ and (ii)
for all $0\le i \le |\w|$, $\atrans(\run_i,\w_i)=\run_{i+1})$.

A \emph{parity automaton} is a tuple
$A=\tuple{\tuple{\A,Q,q_0,\atrans},p}$, where
$\tuple{\A,Q,q_0,\atrans}$ is a finite-state automaton and $p:Q\maps
\set{0,1,\dots,d}$ is a \emph{priority function} that maps every state
to a natural number in $[0,d]$ called \emph{priority}.  A parity
automaton $A$ \emph{accepts a word} $\w$ if the least priority of all
states occurring infinitely often in the run $\run$ of $A$ on $\w$ is
even, i.e., $\min_{q\in\Inf(\run)} p(q)$ is even, where
$\Inf(\run)=\set{q\mid \forall i \exists j>i\, \run_j=q}$.  The
\emph{language} of $A$ denoted by $\lang_A$ is the set of all words
accepted by~$A$.
A \emph{safety automaton} is a parity automaton with only
priorities~$0$ and~$1$, and no transitions from priority-$1$ to
priority-$0$ states.
A \emph{mean-payoff automaton} is a tuple
$A=\tuple{\tuple{\A,Q,q_0,\atrans},\reward}$, where
$\tuple{\A,Q,q_0,\atrans}$ is a finite-state automaton and $\reward: Q
\times \A \rightarrow \nats$ is a \emph{reward function} that
associates to each transition of the automaton a \emph{reward} $v
\in\nats$.  A mean-payoff automaton assigns to each word $\w$ the
long-run average of the rewards, i.e., for a word $\w$ let $\run$ be
the run of $A$ on $\w$, then we have
\[
\L_A(\w)= 
\begin{cases}
 \frac{1}{n}\cdot \sum_{i=1}^n \reward(\run_{i},\w_i)  & \mbox{if $\w$ is finite,}\\
 \lim\inf_{n \to \infty} \L_A(w^n) &\mbox{otherwise.}
\end{cases}
\]
Note that $\L_A$ is a function assigning values to words. 

\highlight{
\begin{example}
  \figref\ref{fig:automaton} shows a mean-payoff automaton
  $A=\tuple{\tuple{\A,Q,q_0,\atrans},\reward}$ for words over the
  alphabet
  $\A=2^{\set{\r,\g}}=\set{\set{},\set{r},\set{g},\set{r,g}}$, which
  are all possible assignments to the two Boolean variables~$\r$
  and~$\g$. E.g., the letter $\set{\r}$ means that variable~$\r$ is
  true and all the other variables (in this case only $\g$) are false.
  The automaton has two states~$q_0$ and~$q_1$ represented by
  circles. State~$q_0$ is the initial state, which is indicated by the
  straight arrow from the left.  Transitions are represented by
  directed arrows.  They are labeled with (i) a conjunction of
  literals representing a set of letters and (ii) in parentheses, the
  reward obtained when following this transition. If a variable~$v$
  appears in positive form in a label, then we can take this
  transition only with a letter that includes~$v$. If the variable~$v$
  appear in negated form (i.e.,~$\bar{v}$), then this transition can
  only be taken with letter that do not include~$v$. Note that
  transitions depend only on the signals that appear in their
  labels. E.g., the self-loop on state~$q_0$ labeled with~$\g(1)$
  means that we can move from~$q_0$ to~$q_0$ with any letter that
  includes $\g$, i.e., either with letter $\set{\g}$ or with
  letter~$\set{r,g}$.
  The automaton assigns to each word in $\A^\omega$
  the average reward.  E.g., the run of $A$ on the word
  $(\set{\r}~\set{\r}~\set{\r\g})^\omega$ is $(q_0~q_0~q_1)^\omega$
  and the corresponding sequence of rewards is $(0~0~1)^\omega$ with
  an average reward of $(0+0+1)/3=1/3$.
\end{example}
}

\highlight{
\begin{figure}[t]
  \begin{minipage}[b]{0.5\textwidth}
    \begin{center}
\begin{tikzpicture}[node distance=2.5cm,auto,bend angle=20]
  {\fsize
    \node[state, initial, initial text=] (q0) {$q_0$};
    \node[state, right of=q0] (q1) {$q_1$};
   
    \path[->] (q0)   
    edge [loop above] node {$\g (1)$} (q0)
    edge [loop below] node {$\nr\ng (1)$} (q0)
    edge [bend left] node {$\r\ng(0)$} (q1);
    
    \path[->] (q1) 
    edge [loop above] node {$\ng(0)$} (q1)
    edge [bend left] node {$\g(1)$} (q0);
  }
\end{tikzpicture}

      \caption{Mean-payoff automaton~$A$}
      \label{fig:automaton}
    \end{center}
  \end{minipage}
  \begin{minipage}[b]{0.5\textwidth}
   \begin{center}
\begin{tikzpicture}[node distance=2.5cm,auto,bend angle=20]
  {\fsize
    \node[state, initial, initial text=] (q0) {$\sstate_0$};
    \node[state, right of=q0] (q1) {$\sstate_1$};
   
    \path[->] (q0)   
    edge [loop below] node {$\nr/\ng$} (q0)
   edge [bend left] node {$\r/\g$} (q1);
   
    \path[->] (q1) 
    edge [bend left] node {$/\ng$} (q0);
  }
\end{tikzpicture}

      \caption{Finite-state system~$M$}
      \label{fig:system}
    \end{center}
 \end{minipage}
\end{figure}
}

\subsection{State machines and Specifications}
A \emph{(finite-)state machine} (or \emph{system}) with \emph{input
  signals} $\I$ and \emph{output signals} $\O$ is a tuple
$M=\tuple{\Sstate,\sstate_0,\atrans,\lambda}$, where
$\tuple{\A_I,\Sstate,\sstate_0,\atrans}$ with $\A_I=2^\I$ is a
(finite-state) automaton and $\lambda: \Sstate \times \A_I \maps \A_0$
with $\A_I=2^\I$ and $\A_O=2^O$ is a \emph{labeling function} that
maps every transition in $\atrans$ to an element in $\A_O$.  The sets
$\A_I$ and $\A_O$ are called the \emph{input and the output alphabet
  of $M$}, respectively.  We denote the joint alphabet $2^{\I\cup\O}$
by $\A$.

Given an input word $\w\in\A_I^*\cup\A_I^\omega$, let $\run$ by the
run of $M$ on $\w$, the \emph{outcome} of $M$ on $\w$, denoted
by~$\outcome_M(\w)$, is the word $v\in \A^*\cup\A^\omega$ s.t.\ for
all $0\le i\le|\w|$, $v_i=\w_i \cup \lambda(\run_i,\w_i)$.  Note that
$\outcome_M$ is the function mapping input words to outcomes.  The
\emph{language} of $M$ denoted by $\lang_M$ is the set of
outcomes of $M$ \highlight{on all infinite input word}.

\highlight{
\begin{example}
  Consider the system~$M$ depicted in \figref\ref{fig:system}.
  System~$M$ has one Boolean input variable~$\r$ and one Boolean
  output variables~$\g$. In every step, $M$ reads the value of the
  variable~$\r$ and sets the value of the variable~$\g$.  More
  precisely, $M$ sets $\g$ to false, whenever either~$\r$ is false in
  the current step or~$\g$ have been true in the previous step.
  The input alphabet of $M$ is~$2^{\set{\r}}=\set{\set{},\set{\r}}$
  and its output alphabet is~$2^{\set{\g}}=\set{\set{},\set{\g}}$.
  Recall that all variables that are absent in a letter are set to
  false, e.g., the input letter $\set{}$ means that the value of $\r$
  is false, while $\set{\r}$ refers to $\r$ being true. We again label
  edges with conjunctions of literals.  The conjunction on the left of
  the slash describes a set of input letters, i.e., a set of
  assignments to the input variables. The conjunction on the right
  describes a single output letter, which corresponds to an assignment
  of the output varibles. E.g., the transition from state~$q_1$ to
  state~$q_0$ labeled $/\ng$ means that if the system is in
  state~$q_1$, then it moves to the state~$q_0$ and sets the
  variables~$\g$ to false for any input letter because the conjunction
  for the input variables is empty.

  Consider the input word $\w=\set{\r} \set{\r} \set{} \set{\r}$. The
  outcome of~$M$ of~$\w$ is the combined word $\set{\r\g} \set{\r}
  \set{} \set{\r\g}$. The language of~$M$ are all the infinite words
  generated by arbitrarily concatenating the following three words:
  (i) $\w_1=\set{}$, (ii) $\w_2=\set{r,g}\set{r}$, and (iii)
  $\w_3=\set{r,g}\set{}$, i.e., $\lang_M = (\w_1 | \w_2
  |\w_3)^\omega$.  
\end{example}
}

We analyze state machines with respect to qualitative and quantitative
specifications.
\emph{Qualitative specifications} are qualitative languages, i.e.,
subsets of $\A^\omega$ or equivalently functions mapping words to~$0$
and~$1$.  We consider $\omega$-regular specifications given as safety
or parity automata.  Given a safety or parity automaton $A$ and a
state machine $M$, we say \emph{$M$ satisfies $L_A$} (written
$M\models\L_A$) if $\lang_M\subseteq\lang_A$ or equivalently $\forall
\w \in\A_I^\omega: \lang_A(\outcome_M (\w))=1.$
\emph{A quantitative specification} is given by a quantitative
language $\L$, i.e., a function that assigns values to words.  Given a
state machine $M$, we use function composition to relate $\L$ and $M$,
i.e., $\L \circ \outcome_M$ is mapping every input word $\w$ of $M$ to
the value assigned by $\L$ to the outcome of $M$ on $\w$.  We consider
quantitative specifications given by Mean-payoff automata.

\subsection{Markov Chains and Markov Decision Processes (MDP)} 
A \emph{probability distribution} over a finite
set $S$ is a function $\d:S\maps [0,1]$ such that $\sum_{q\in Q}
\d(q)=1$.  We denote the set of all probabilistic distributions over
$S$ by $\D(S)$.
A \emph{Markov Decision Process (MDP)} $\gamegraph =\tuple{S, s_0, E,
  \SA,\SR,\trans}$ consists of a finite set of \emph{states} $S$, an
\emph{initial state} $s_0\in S$, a set of \emph{edges} $E\subseteq
S\times S$, a partition $\tuple{\SA$, $\SR}$ of the set $S$, and a
probabilistic transition function $\trans$: $\SR \rightarrow
\distr(S)$.  The states in $\SA$ are the {\em player-$\PA$\/} states,
where player~$\PA$ decides the successor state; and the states in
$\SR$ are the {\em probabilistic\/} states, where the successor state
is chosen according to the probabilistic transition function~$\trans$.
{So, we can view an MDP as a game between two players:
  player $\PA$ and a \emph{random player} that plays according
  to~$\trans$.}
We assume that for $s \in \SR$ and $t \in S$, we have $(s,t) \in E$
iff $\trans(s)(t) > 0$, and we often write $\trans(s,t)$ for
$\trans(s)(t)$.  For technical convenience we assume that every state
has at least one outgoing edge.  For a state $s\in S$, we write $E(s)$
to denote the set $\set{t \in S \mid (s,t) \in E}$ of possible
successors.  If the set $\SA=\emptyset$, then $\gamegraph$ is called a
\emph{Markov Chain} and we omit the partition $\tuple{\SA$, $\SR}$
from the definition.
A \emph{$\A$-labeled MDP} $(G,\lambda)$ is an MDP $G$
with a labeling function $\lambda:S\rightarrow \A$ assigning to each
state of $G$ a letter from~$\A$.  We assume that labeled MDPs are
deterministic and complete, i.e., (i) $\forall(s,s'),(s,s'')\in E$,
$\lambda(s')=\lambda(s'') \rightarrow s'=s''$ holds, and (ii) $\forall
s\in S, \a\in\A,\ \exists s'\in S$ s.t.\ $(s,s')\in E$ and
$\lambda(s')=\a$.

\subsection{Plays and strategies}
An infinite path, or a \emph{play}, of the MDP $\gamegraph$ is an
infinite sequence $\pat=s_0 s_1 s_2\ldots$ of states such that
$(s_k,s_{k+1}) \in E$ for all $k \in \nats$.  Note that we use $\pat$
only to denote plays, i.e., infinite sequences of states.  We
use~$\fpath$ to refer to finite sequences of states.
We write~$\Paths$ for the set of all plays, and for a state $s \in S$,
we write $\Paths_s\subseteq\Paths$ for the set of plays starting
at~$s$.
A \emph{strategy} for player~$\PA$ is a function $\straa$: $S^* \SA
\to \distr(S)$ that assigns a probability distribution to all finite
sequences $\fpath \in S^*\SA$ of states ending in a player-1 state.
Player~$\PA$ follows~$\straa$, if she make all her moves according to
the distributions provided by~$\straa$.
A strategy must prescribe only available moves, i.e., for all $\fpath
\in S^*$, $s \in \SA$, and $t \in S$, if $\straa(\fpath s)(t) >
0$, then $(s, t) \in E$.  We denote by $\Straa$ the set of all
strategies for player~$\PA$.
Once a starting state $s \in S$ and a strategy $\straa \in \Straa$ is
fixed, the outcome of the game is a random walk $\pat_s^{\straa}$ for
which the probabilities of every \emph{event} $\Aa \subseteq \Paths$,
which is a measurable set of plays, are uniquely defined.

For a state $s \in S$ and an event $\Aa\subseteq\Paths$, we write
$\Prb_s^{\straa}(\Aa)$ for the probability that a play belongs to
$\Aa$ if the game starts from the state $s$ and player~1 follow the
strategy $\straa$, respectively.  For a measurable function $f:\Paths
\to \reals$ we denote by $\Exp_s^{\straa}[f]$ the \emph{expectation}
of the function $f$ under the probability measure
$\Prb_s^{\straa}(\cdot)$.

Strategies that do not use randomization are called pure.  A player-1
strategy~$\straa$ is \emph{pure} if for all $\fpath \in S^*$ and $s
\in \SA$, there is a state~$t \in S$ such that $\straa(\fpath s)(t) =
1$.  A \emph{memoryless} player-1 strategy depends only on the current
state, i.e., for all $\fpath,\fpath' \in S^*$ and for all $s \in \SA$
we have $\straa(\fpath s) =\straa(\fpath' s)$.  A memoryless strategy
can be represented as a function $\straa$: $\SA \to \distr(S)$.  A
\emph{pure memoryless strategy} is a strategy that is both pure and
memoryless.  A pure memoryless strategy can be represented as a
function $\straa$: $\SA \to S$. 
\highlight{A \emph{pure finite-state strategy} is a strategy that can
  be represent by a finite-state machine
  $M=\tuple{\Sstate,\sstate_0,\atrans,\lambda}$ with input
  alphabet~$\A_I=S$ and output alphabet~$\A_O=S$. The state~$\Sstate$
  represent a set of memory locations with $\sstate_0$ as the initial
  memory content. The transition function $\atrans: \Sstate \times S
  \to \Sstate$ describes how to update the memory while moving to the
  next state in the MDP. The labeling function~$\lambda: \Sstate
  \times S \to S$ defines the moves of Player 1, i.e., for every
  memory location and state of the MDP, it provides a successor state
  in the MDP.  }

\highlight{ 
  \subsection{Resulting Markov chains, recurrence classes, unichain,
    and multichain}
  Given an MDP~$\gamegraph$ and a pure memoryless or finite-state
  strategy~$\pi$, if we restrict~$\gamegraph$ to follow the actions
  suggested in~$\pi$, we obtain a Markov chain. 

  Given a Markov chain~ $\gamegraph =\tuple{S, s_0, E,\trans}$, a
  state~$s\in S$ is called~\emph{recurrent}\footnote{Note that we do
    not distinguish null or positive recurrent states since we only
    consider finite Markov chains.} if the expected number of visits
  to~$s$ is infinite.  Otherwise, the state~$s$ is
  called~\emph{transient}.  A maximal set of recurrent states that is
  closed\footnote{We use the usual definition for closed, i.e., given
    a set~$Y$, a set $X\subseteq Y$ is closed under a
    relation~$R\subseteq Y\times Y$, if forall $x\in X$ and forall
    $y\in Y$, if $(x,y) \in R$, then $y\in X$.} under $E$ is
  called~\emph{recurrence class.}  A Markov chain~$\gamegraph$ is
  \emph{unichain} if it has a single recurrence
  class. Otherwise,~$\gamegraph$ is called \emph{multichain}.}

\subsection{Quantitative Objectives} A quantitative objective is 
given by a measurable function $f:\Paths \to \reals$.  We consider
several objectives based on priority and reward functions.
Given a priority function $p : S \to \set{0,1,\dots, d}$, we defined
the set of plays satisfying the parity objective as $\Paths_p=
\set{\pat \in \Paths \mid \min\big(p(\Inf(\pat))\big) \text{ is
    even}}.$ A \emph{Parity objective} $\Parity[p]$ is the
characteristic function of~$\Paths_p$.
Given a reward function $r:S \to \nats \cup \set{\bot}$, the
\emph{mean-payoff objective} $\Mean[r]$ for a play
$\pat=s_1s_2s_3\ldots$ is defined as $\Mean[r](\pat)=\lim\inf_{n \to
  \infty} \frac{1}{n}\cdot \sum_{i=1}^n r(s_i)$, if for all
$i>0:r(s_i)\ne\bot$, otherwise $\Mean[r](\pat)=\bot$.
Given a priority function $p$ and a reward function $r$ the
\emph{mean-payoff parity objective} $\MP[p,r]$
assigns the long-run average of the rewards if the parity objective is
satisfied; otherwise it assigns $\bot$.  Formally, for a play~$\pat$ we have
\[
\MP[p,r](\pat)=\begin{cases} \Mean[r](\pat) & \mbox{if }
\Parity[p](\pat)=1,\\
\bot & \mbox{otherwise}.
\end{cases}
\]

For a reward function $r:S \to \reals$ the \emph{max
  objective}~$\Max{r}$ assigns to a play the maximum reward that
appears in the play.  Note that since $S$ is finite, the number of
different rewards appearing in a play is finite and hence the maximum
is defined.  Formally, for a play $\pat=s_1s_2s_3\ldots$ we have
$\Max{r}(\pat)=\max \langle r(s_i) \rangle_{i \geq 0}.$ 

\subsection{Values, optimal stratgies, and almost-sure winning states}
Given an MDP $G$, the \emph{value} function $\va_G$ for an objective
$f$ is the function from the state space $S$ to the set $\reals$ of
reals.  For all states $s\in S$, let
$\va_G(f)(s)  = \displaystyle \sup_{\straa \in \Straa}  
\Exp_s^{\straa}[f].$
In other words, the value $\va_G(f)(s)$ is the maximal expectation
with which player~1 can achieve her objective $f$ from state~$s$.  A
strategy $\straa$ is \emph{optimal} from state $s$
for objective $f$ if $\va_G(f)(s)=\Exp_s^{\straa}[f]$.  For parity
objectives, mean-payoff objectives, and max objectives pure memoryless
optimal strategies exist in MDPs.

Given an MDP $G$ and a priority function $p$, we denote by
$W_G(\Parity[p]) = \set{s \in S \mid \va_G(\Parity[p])(s)=1},$ the set
of states with value~1.  These states are called the
\emph{almost-sure} winning states for the player and an optimal
strategy from the almost-sure winning states is called a
almost-sure winning strategy.  The set $W_G(\Parity[p])$ for an MDP
$G$ with priority function $p$ can be computed in $O(d \cdot
n^{\frac{3}{2}})$ time, where $n$ is the size of the MDP $G$ and $d$
is the number of priorities~\cite{CJH03,CJH04}.  For states in
$S\setminus W_G(\Parity[p])$ the parity objective is falsified with
positive probability for all strategies, which implies that for all
states in $S\setminus W_G(\Parity[p])$ the value is less than~1 (i.e.,
$\va_G(\Parity[p])(s)<1$).

\section{Measuring Systems}
\label{sec:measures}

In this section, we start with an example to explain the problem and
introduce our measure. Then, we define the measure formally and show
finally, how to compute the value of a system with respect to the
given measure.

\begin{example}
  \label{ex:controller}
  Recall the example from the introduction, where we consider a
  resource controller for two clients.  Client~$i$ requests the
  resource by setting its request signal $r_i$.  The resource is
  granted to Client~$i$ by raising the grant signal $g_i$. We require
  that the controller guarantees mutually exclusive access and that it
  is fair, i.e., a requesting client eventually gets access to the
  resource.  Assume we prefer controllers that respond quickly.
  \figref\ref{fig:grant_fast} shows a specification that rewards a
  quick response to request $\r_i$.  The specification is given as a
  Mean-payoff automaton that measures the average delay between a
  request $\r_i$ and a corresponding grant $\g_i$.  Recall that
  transitions are labeled with a conjunction of literals and a reward
  in parentheses.  In particular, whenever a request is granted the
  reward is~$1$, while a delay of the grant results in reward~$0$.
  The automaton assigns to each word in $(2^{\set{\r_i,\g_i}})^\omega$
  the average reward.  For instance, the value of the word
  $(\set{\r_i} \set{\r_i,\g_i})^\omega$ is $(0+1)/2=1/2$.  We can take
  two copies of this specification, one for each client, and assign to
  each word in $(2^{\set{\r_1,\r_2,\g_1,\g_2}})^\omega$ the sum of the
  average rewards.  E.g., the word $(\set{r_1,\g_2}
  \set{\r_1,\g_1})^\omega$ gets an average reward of~$1/2$ with
  respect to the first client and reward~$1$ with respect to the
  second client, which sums up to a total reward of~$3/2$.

\begin{figure}[t]
  \begin{minipage}[b]{0.28\textwidth}
    \begin{center}
\begin{tikzpicture}[node distance=2cm,auto,bend angle=20]
  {\fsize
    \node[state, initial, initial text=] (q0) {$q_0$};
    \node[state, right of=q0] (q1) {$q_1$};
   
    \path[->] (q0)   
    edge [loop above] node {$\g_i (1)$} (q0)
    edge [loop below] node {$\nr_i\ng_i (1)$} (q0)
    edge [bend left] node {$\r_i\ng_i(0)$} (q1);
    
    \path[->] (q1) 
    edge [loop above] node {$\ng_i(0)$} (q1)
    edge [bend left] node {$\g_i(1)$} (q0);
  }
\end{tikzpicture}

      \caption{Automaton~$A_i$}
     \label{fig:grant_fast}
    \end{center}
  \end{minipage}
 \begin{minipage}[b]{0.27\textwidth}
    \begin{center}

\begin{tikzpicture}[node distance=2cm,auto,bend angle=20]
  {\fsize
    \node[state, initial, initial text=] (q0) {$\sstate_0$};
    \node[state, right of=q0] (q1) {$\sstate_1$};
   
    \path[->] (q0)   
    edge [bend left] node {$/\g_1\ng_2$} (q1);
  
    \path[->] (q1) 
    edge [bend left] node {$/\ng_1\g_2$} (q0);

    \textcolor{white}{
    \path[->] (q0)   
    edge [loop below] node {$/\g_1\ng_2$} (q0);}
  }
\end{tikzpicture}

    \end{center}
    \caption{System~$M_1$}
    \label{fig:system1}
  \end{minipage}
  \begin{minipage}[b]{0.45\textwidth}
    \begin{center}
\begin{tikzpicture}[node distance=2cm,auto,bend angle=15]
  {\fsize
    \node[state, initial, initial text=] (q0) {$\sstate_0$};
    \node[state, right of=q0] (q1) {$\sstate_1$};
    \node[state, right of=q1] (q2) {$\sstate_2$};
   
    \path[->] (q0) 
    edge [loop above] node {$\nr_2/\g_1\ng_2$} (q0)
    edge [loop below]       node {$\nr_1\r_2/\ng_1\g_2$} (q0)
    edge [bend left]          node {$\r_1\r_2/\g_1\ng_2$} (q1);
  
    \path[->] (q1) 
    edge [bend left]          node {$\nr_1/\ng_1\g_2$} (q0)
    edge [bend left]          node {$\r_1/\ng_1\g_2$} (q2);

    \path[->] (q2)
    edge [bend left]          node {$\r_2/\g_1\ng_2$} (q1);

    \draw [->] (q2) .. controls (3.5,1.3) and (0.5,1.3) .. (q0);
    \node at (2,1.3) {$\nr_2/\g_1\ng_2$};
  }
\end{tikzpicture}

    \end{center}
    \caption{System $M_2$ preferring~$r_1$.}
    \label{fig:system2}
  \end{minipage}
\end{figure}

  Consider the systems $M_1$ and $M_2$ in \figref\ref{fig:system1}
  and~\ref{fig:system2}, respectively.  Transitions are labeled with
  conjunctions of input and output literals separated by a slash.
  System~$M_1$ alternates between granting the resource to Client~1
  and~2.  System~$M_2$ grants the resource to Client~2, if only
  Client~2 is sending requests.  By default it grants the resource to
  Client~1.  If both clients request, then the controller alternates
  between them.  Both systems are correct with respect to the
  functional requirements describe above: they are fair to both
  clients and guarantee that the resource is not accessed
  simultaneously.

  Though, one can argue that System~$M_2$ is better than~$M_1$ because
  the delay between requests and grants is, for most input sequences,
  smaller than the delay in System~$M_1$.  For instance, consider the
  input trace $(\set{\r_2} \set{\r_1})^\omega$.  The response of
  System~$M_1$ is $(\set{\g_1}\set{\g_2}) ^\omega$.  Looking at the
product between the system $M_1$ and the specifications~$A_1$
and~$A_2$ shown in \figref\ref{fig:prod-system1}, we can see that this
results in an average reward of $1$.
Similar, \figref\ref{fig:prod-system2} shows the product of $M_2$,
$A_1$, and~$A_2$. System~$M_2$ responds with $(\set{\g_2}
\set{\g_1})^\omega$ and obtains a reward of $2$.
Now, consider the sequence $(\set{\r_1,\r_2})^\omega$, which is the
worst input sequence the environment can provide. In both systems,
this sequences leads to a reward of $1$, which is the lowest possible
reward.  So $M_1$ and $M_2$ cannot be distinguished with respect to
their worst case behavior.

In order to measure a system with respect to its average behavior, we
aim to average over the rewards obtained for all possible input
sequences.  Since we have infinite sequences, one way to average is
the limit of the average over all finite prefixes.  Note that this can
only be done if we know the values of finite words with respect to the
quantitative specification.  For instance, for a finite-state
machine~$M$ and a Mean-payoff automaton~$A$, we can define the
average as $\V{\oslash}{L_A}{M}:=\lim_{n\to\infty}
\frac{1}{|\A_I|^n}\sum_{w\in\A_I^n} \L_A(\outcome_M(w^n)).$
However, if we truly want to capture the average behavior, we need to
know, how often the different parts of the system are used. This
corresponds to knowing how likely the different input sequences are.
The measure above assumes that all input sequences are ``equally
likely''.  In order to define measures that take the behavior of the
environment into account, we use a probability measure on input words.
In particular, we consider the probability space $(\A_I^\omega,{\cal
  F},\Prb)$ over $\A_I^\omega$, where ${\cal F}$ is the
$\sigma$-algebra generated by the cylinder sets of $\A^\omega$ (which
are the sets of infinite words sharing a common prefix) (in other
words, we have the Cantor topology on $\A_I^\omega$) and $\Prb$ is a
probability measure defined on $(\A^\omega,{\cal F})$.  We use
finite labeled Markov chains to define the probability
measure~$\Prb$.

 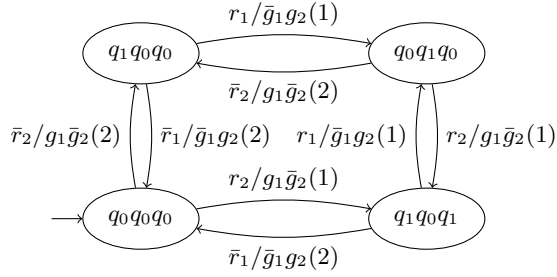
\begin{figure}[t]
      \begin{center}
\begin{tikzpicture}[node distance=2.2cm,auto, bend angle=10]
  {\fsize
   \node[state, initial, initial text=] (q) [shape=ellipse] {${\sstate_0 q_0 q_0}$};
    \node[state, above of=q]           (q0) [shape=ellipse]  {${\sstate_1 q_0 q_0}$};
    \node[state] at(3.8,0)                    (q1) [shape=ellipse]  {${\sstate_1 q_0 q_1}$};
    \node[state, above of=q1]       (q01) [shape=ellipse] {${\sstate_0 q_1 q_0}$};

    \path[->] (q)
    edge [bend left] node {$\nr_2/\g_1\ng_2(2)$} (q0)
    edge [bend left] node {$\r_2/\g_1\ng_2(1)$} (q1);
   
    \path[->] (q0)
    edge [bend left] node {$\r_1/\ng_1\g_2(1)$} (q01)
    edge [bend left] node {$\nr_1/\ng_1\g_2(2)$} (q);

    \path[->] (q1)
    edge [bend left] node {$\r_1/\ng_1\g_2(1)$} (q01)
    edge [bend left] node {$\nr_1/\ng_1\g_2(2)$} (q);

    \path[->] (q01)
    edge [bend left] node {$\nr_2/\g_1\ng_2(2)$} (q0)
    edge [bend left] node {$\r_2/\g_1\ng_2(1)$} (q1);
}
\end{tikzpicture}

      \end{center}
      \caption{Product of $M_1$ with \nobreak{Specification~$A_1$ and~$A_2$.}}
      \label{fig:prod-system1}
\end{figure}
 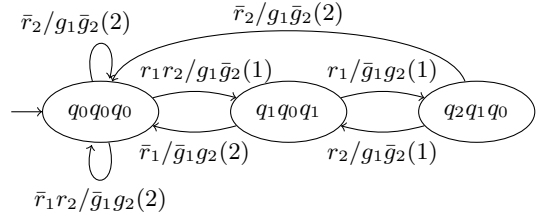
\begin{figure}[t]
      \begin{center}
\begin{tikzpicture}[node distance=2.5cm,auto, bend angle=15]
  {\fsize
    \node[state, initial, initial text=] (q) [shape=ellipse] {${\sstate_0 q_0 q_0}$};
    \node[state, right of=q]             (q0) [shape=ellipse] {${\sstate_1 q_0 q_1}$};
    \node[state, right of=q0]           (q1) [shape=ellipse] {${\sstate_2 q_1 q_0}$};

    \path[->] (q) 
    edge [loop above] node {$\nr_2/\g_1\ng_2(2)\qquad$} (q)
    edge [loop below] node {$\nr_1\r_2/\ng_1\g_2(2)$} (q)
    edge [bend left]    node {$\quad\r_1\r_2/\g_1\ng_2(1)$} (q0);
  
    \path[->] (q0) 
    edge [bend left]          node {$\nr_1/\ng_1\g_2(2)$} (q)
    edge [bend left]          node {$\r_1/\ng_1\g_2(1)$}  (q1);

    \path[->] (q1)
    edge [bend left]          node  {$\r_2/\g_1\ng_2(1)$}(q0);

    \draw [->] (q1) .. controls (4.5,1.3) and (0.5,1.3) .. (q);
    \node at (2.5,1.3) {$\nr_2/\g_1\ng_2(2)$};
  }
\end{tikzpicture}

      \end{center}
      \caption{Product of $M_1$, $A_1$, and $A_2$.}
      \label{fig:prod-system2}
  \end{figure}
\end{example}

\begin{example}
  Recall the controller of Example~\ref{ex:controller}. Assume we know
  We can represent such a behavior by assigning probabilities to the
  events in $\A=2^{\set{\r_1,\r_2}}$.  Assume Client~$1$ sends
  requests with probability $p_1$ and Client~$2$ sends them with
  probability $p_2<p_1$, independent of what has happened before.
  Then, we can build a labeled Markov chain with four states
  $S_p=\set{s_0,s_1,s_2,s_3}$ each labeled with a letter in $\A$,
 i.e., $\lambda(s_0)=\set{}$, $\lambda(s_1)=\set{\r_2}$,
  $\lambda(s_2)=\set{\r_1}$, and $\lambda(s_3)=\set{\r_1,\r_2}$, and the
  following transition probabilities: (i)
 $\trans(s_i)(s_0)=(1-p_1)\cdot(1-p_2)$, (ii)
  $\trans(s_i)(s_1)=(1-p_1)\cdot p_2$, (iii)
  $\trans(s_i)(s_2)=p_1\cdot (1-p_2)$, and (iv)
  $\trans(s_i)(s_3)=p_1\cdot p_2$, for all $i\in\set{0,1,2,3}$.
\end{example}

Once we have a probability measure $\Prb$ on the input sequences and
the associated expectation measure $\Exp$, we can define a
satisfaction relation between systems and specifications and a measure
for a system with respect to a qualitative and a quantitative
specification.

\begin{definition}[Satisfaction]
  Given a state machine $M$ with input alphabet~$\A_I$, a qualitative
  specification $\varphi$, and a probability measure $\Prb$ on
  $(\A_I^\omega,{\cal F})$, we say that \emph{$M$ satisfies $\varphi$
    under $\Prb$} (written $M\models_{\Prb} \varphi$) iff $M$ satisfies
  $\varphi$ with probability~$1$, i.e.,
  $\Exp[\varphi\circ\outcome_M]=1$, where $\Exp$ is the expectation
  measure for $\Prb$. 
\end{definition}

Recall that we use a quantitative specification to describe how
``good'' a system is. Since we aim for a system that satisfies the
given (qualitative) specification and is ``good'' in a given sense, we
define the value of a machine with respect to a qualitative and a
quantitative specification.

\begin{definition}[Value]\label{def:value}
  Given a state machine $M$, a qualitative specification~$\varphi$,
  quantitative specification $\psi$, and a probability measure $\Prb$
  on $(\A_I^\omega,{\cal F})$, the \emph{value of $M$ with respect to
    $\varphi$ and $\psi$ under $\Prb$} is defined as the expectation
  of the function $\psi \circ \outcome_M$ under the probability
  measure $\Prb$ if \emph{$M$ satisfies $\varphi$ under $\Prb$},
  and~$\bot$ otherwise.  Formally, let~$\Exp$ be the expectation
  measure for $\Prb$, then
  \[
  \V{\Prb}{\varphi\psi}{M} :=
  \begin{cases}
    \Exp[\psi \circ \outcome_M] & \mbox{if } M\models_{\Prb} \varphi,\\
    \bot & \mbox{otherwise.}\\
  \end{cases}
  \]
  If $\varphi$ is the set of all words, then we write
  $\V{\Prb}{\psi}{M}$. Furthermore, we say \emph{$M$ optimizes $\psi$
    under $\mu$}, if $\V{\Prb}{\psi}{M} \ge \V{\Prb}{\psi}{M'}$ for
  all systems~$M'$.
\end{definition} 
 
{In Definition~\ref{def:value}, we could also consider the traditional
  satisfaction relation, i.e., $M\models \varphi$.  We have algorithms
  for both notions but we focus on satisfaction under $\mu$, since
  satisfaction with probability~$1$ is the natural correctness
  criterion, if we are given a probabilistic environment assumption.
  Note that for safety specifications the two notions coincide,
  because we assume that the labeled Markov chain defining the input
  distribution is complete.
  \footnote{Recall that a Markov chain is complete, if in every state
    there is an edge for every input value.  Since every edge has a
    positive probability, also every finite path has a positive
    probability and therefore a system violating a safety
    specification will have a value~$\bot$. If the Markov chain is not
    complete (i.e., we are given an input distribution that assigns
    probability~$0$ to some finite input sequences), we require a
    simple pre-processing step that restricts our algorithms to the
    set of states satisfying the safety condition independent of the
    input assumption. This set can be computed in linear time by
    solving a safety game.} %
  For parity specifications, the results in this section would change
  only slightly if we replace $M\models_{\Prb} \varphi$ by $M\models
  \varphi$.  In particular, instead of analyzing a Markov chain with
  parity objective, we would have to analyze an automaton with parity
  objective.  We discuss the the alternative synthesis algorithm in
  the conclusions.}

\begin{lemma}
  \label{lm:mc}
  Given a finite-state machine~$M$, a safety or a parity automaton~$A$, a
  mean-payoff automaton~$B$, and a labeled Markov chain $(G,\lambda_G)$
  defining a probability measure $\Prb$ on $(\A_I^\omega,{\cal F})$,
  we can construct a Markov chain $G'=(S',s_0',E',\delta')$, a reward
  function~$\reward'$, and a priority function~$p'$ such that
\[
\V{\Prb}{L_A,L_B}{M}=
\begin{cases}
  2\cdot\va_{G'}(\Mean[r'])(s_0') & \mbox{if A is a safety automaton,}\\
  2\cdot\va_{G'}(\MP[p',r'])(s_0') & \mbox{otherwise}.\\
\end{cases}
\]
\end{lemma}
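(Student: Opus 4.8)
The plan is to build $G'$ as a synchronous product of four components — the system $M$, the safety/parity automaton $A$, the mean-payoff automaton $B$, and the input Markov chain $(G,\lambda_G)$ — and then argue that the value computed on $G'$ matches $\V{\Prb}{L_A,L_B}{M}$. First I would form the product state space $S' = S_M \times Q_A \times Q_B \times S_G$, where the $S_G$-component is the only source of randomness: the transition function $\delta'$ inherits the probabilities $\delta_G$ from the Markov chain on the $S_G$-coordinate, and all other components update deterministically. Concretely, from a product state the next input letter $\sigma_I$ is read off from the label of the successor $S_G$-state (chosen with probability $\delta_G$), the system $M$ produces its output via $\lambda_M$, and the combined letter $\sigma = \sigma_I \cup \lambda_M(\cdots)$ drives the synchronous moves of $A$ and $B$ via their transition functions $\atrans_A,\atrans_B$. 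Since $G$ is deterministic and complete (as assumed for labeled MDPs), every input letter has a unique successor in each deterministic component, so $G'$ is a well-defined Markov chain with a single initial state $s_0'$ assembled from the initial states of the four components.

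The key observation is that sampling an infinite input word $\w \in \A_I^\omega$ according to $\Prb$ is in bijective, measure-preserving correspondence with sampling a play $\pat$ of $G'$ from $s_0'$: the $S_G$-projection of $\pat$ is exactly the run of the Markov chain generating $\w$, and the $Q_B$- and $Q_A$-projections track the runs of $B$ and $A$ on $\outcome_M(\w)$. I would next define $r'$ so that the reward accumulated along a product transition equals the reward $B$ assigns to the corresponding transition of its run on $\outcome_M(\w)$; because $B$ is a single mean-payoff automaton while the examples take a \emph{sum} of per-client rewards, the factor of $2$ in the statement reflects the normalization used when combining the copies (the displayed products in Figures~\ref{fig:prod-system1} and~\ref{fig:prod-system2} carry rewards up to $2$), so I would set $r'$ to the summed rewards and the value equals $2$ times the mean-payoff value precisely because $\Mean[r']$ averages the doubled rewards. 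For the safety case there is no parity constraint to impose: completeness of the Markov chain guarantees $M\models_\Prb L_A$ iff $M\models L_A$, so conditioned on satisfaction the value is just $\Exp_{s_0'}^{}[\Mean[r']]$, which equals $\va_{G'}(\Mean[r'])(s_0')$ since a Markov chain has a unique ``strategy''.

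For the parity case I would set $p'$ to be the priority of the $Q_A$-component of each product state, so that $\Parity[p'](\pat)=1$ exactly when the run of $A$ on $\outcome_M(\w)$ is accepting, i.e.\ when $L_A(\outcome_M(\w))=1$. Then $\MP[p',r']$ returns the mean-payoff when $A$ accepts and $\bot$ otherwise, matching Definition~\ref{def:value}: the condition $M\models_\Prb L_A$ means $A$ accepts with probability $1$, which is equivalent to $\va_{G'}(\Parity[p'])(s_0')=1$, and in that case the expected mean-payoff coincides with $\va_{G'}(\MP[p',r'])(s_0')$. I would close by invoking the standard fact, recalled in the preliminaries, that a Markov chain is an MDP with $\SA=\emptyset$, so its value function is simply the expectation of the objective, requiring no supremum over strategies.

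The main obstacle I anticipate is the careful bookkeeping that makes the measure-preserving correspondence between $(\A_I^\omega,\Prb)$ and plays of $G'$ precise: one must verify that the product construction does not distort probabilities (all randomness localized to the $S_G$-coordinate) and that $\Mean[r']$ on $G'$ genuinely computes $\Exp[L_B\circ\outcome_M]$ — this uses the fact that the long-run average objective $\Mean[r']$ on the Markov chain equals the $\Prb$-expectation of the per-word mean-payoff $L_B(\outcome_M(\w))$, which follows because almost every play settles into a recurrence class whose stationary average reward is attained $\Prb$-almost surely. Handling the $\liminf$ definition of $L_B$ and confirming that it commutes with the expectation (so that the per-word $\liminf$ averages agree with the chain-level mean-payoff value almost surely) is the delicate analytic point; everything else is routine product-automaton construction.
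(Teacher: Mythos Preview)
Your direct synchronous product $S' = S_M \times Q_A \times Q_B \times S_G$ is a perfectly reasonable construction, and the measure-preserving correspondence you describe is correct. The problem is the factor~$2$ in the lemma statement: your construction does \emph{not} produce it, and your explanation for it is wrong. The factor has nothing to do with ``summing two copies'' of a per-client specification --- the lemma is stated for a single mean-payoff automaton~$B$, and the examples with two clients are irrelevant to the general claim. With your direct product and the natural reward assignment, you would get $\V{\Prb}{L_A,L_B}{M}=\va_{G'}(\Mean[r'])(s_0')$ with no factor at all.

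The paper's construction is different in exactly the way needed to account for the~$2$: it builds a \emph{bipartite} Markov chain with state space $Q\times S\times\{0,1\}$, where $Q$ is the state set of the product $M\times A\times B$. From a state with flag~$0$ the chain takes a probabilistic step choosing the next input according to~$\delta_G$; from a state with flag~$1$ it takes a deterministic step updating the $M\times A\times B$ component on that input. The reward of~$B$ is placed only on the flag-$1$ intermediate states, and the flag-$0$ states carry reward~$0$. Because exactly every second state along any play has reward~$0$, the long-run average in~$G'$ is half the intended average, whence $\V{\Prb}{L_A,L_B}{M}=2\cdot\va_{G'}(\Mean[r'])(s_0')$. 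If you want to keep your one-step product you must drop the factor~$2$ from the conclusion; if you want to match the lemma as written you must switch to the two-phase construction. Either way, the ``doubled rewards from two copies'' rationale should be discarded.

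A smaller point: in the safety case the paper does not rely on an external check of $M\models L_A$; instead it overwrites $r'$ to~$\bot$ on all priority-$1$ states, so that $\Mean[r']$ itself returns~$\bot$ whenever a rejecting sink is entered. Your sketch leaves unstated how the bare mean-payoff objective on~$G'$ would yield~$\bot$ when safety is violated.
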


\ifshort
Intuitively, we first build the product of~$M$,~$A$,~$B$ (cf.\
\figref\ref{fig:prod-system1}).  Then, $G'$ alternates between (1)
moving according to $\gamegraph$, which means choosing an input value
according to the distribution given by $\gamegraph$, and (2) moving in
$M\times A \times B$ according to the chosen input.  The reward given
by $B$ for this transition is assigned to the intermediate state.  The
priorities are copied from $A$.  The value $\V{\Prb}{\lang_B}{M}$ is
twice the expectation $\va_{G'}(\Mean[r'])(s'_0)$, since we have
introduced $0$-rewards in every second step.  A detailed proof can be
found in the appendix.

\else
\begin{proof} To build $G'$, we first build the product of~$M$,~$A$,
  and~$B$ (cf.\ Figure~\ref{fig:prod-system1}), which is a
  finite-state machine $C=(Q,q_0,\atrans,\lambda)$ augmented with a
  (transition) reward function $\reward: Q \times \A_I \maps \nats$
  and a priority function $p:Q\to \set{0,\dots,d}$.
  Let $\gamegraph=(S,s_0,E,\trans)$, then we construct a Markov chain
  $G'=(S',s_0',E'\cup E'',\trans')$, a reward function
  $\reward':S'\to\nats$, and a priority function
  $p':S'\to\set{0,\dots,d}$ as follows: %
  $S' =Q\times S \times \set{0,1}$, %
  $s_0'=(q_0,s_0,0)$, %
  $E'=\set{((q,s,0),(q,s',1))\mid (s,s')\in E}$, %
  $E''=\set{((q,s,1)(q',s,0))\mid \atrans(q,\lambda_G(s))=q'}$, and

  \[
  \trans'(t)(t')=\begin{cases}
    1 & \mbox{if } (t,t')\in E''\\
    \delta(s,s') & \mbox{if } (t,t')\in E', t=(q,s,0),
    \mbox{ and } t'=(q,s',1)\\
    0 & \mbox{otherwise.}
  \end{cases}
  \]

  In $G'$ every transition of $M\times A$ is split into two parts: in
  the first part, $G'$ chooses the input value according to the
  distribution given by $\gamegraph$.  In the second part, $G'$
  outputs the value from $M$ corresponding to the chosen input.  The
  reward given by $A$ for this transition is assigned to the
  intermediate state, i.e., $\reward'(s')=\reward(q,\lambda_G(s))$ if
  $s'=(q,s,1)$, otherwise $\reward'(s')=0$, and the priorities are
  copied from~$A$, i.e., $p'((q,s,b))=p(q)$.  If~$A$ is a safety
  automaton, we overwrite the rewards function $r'$ to map all states
  $s'\in S'$ with priority $1$ to $\bot$, i.e., $r'(s)=\bot$ if
  $p'(s)=1$. This allows us to ignore the priority function and
  compute the system value based on the mean-payoff value.

Note that we can also compute $M \models_\mu L_A$ and
$\V{\mu}{L_B}{M}$ separately by building two MCs: (1) $G'$ augmented
with a priority function $p'$ and (2) $G''$ augmented with a reward
function $r''$.  Then, $\V{\Prb}{L_A,L_B}{M}=
\va_{G'}(\Mean[r'])(s'_0)$, if $\va_{G''}(\Parity[p''])(s''_0)=1$,
otherwise $\V{\Prb}{L_A,L_B}{M}=\bot$.  Even though, the approach with
two MCs has a better complexity, we constructed a single MC to show
the similarity between verification and synthesis.  \qed
\end{proof}

\fi

\begin{theorem}
  \label{thm:mc}
  Given a finite-state machine~$M$, a parity automaton~$A$, a
  mean-payoff automaton~$B$, and a labeled Markov chain
  $(G,\lambda_G)$ defining a probability measure $\Prb$, we can
  compute the value $\V{\Prb}{L_A,L_B}{M}$ in polynomial time.
  Furthermore, if $(G,\lambda_G)$ defines a uniform input
  distribution, then
  $\V{\oslash}{L_B}{M}=\V{\Prb}{L_B}{M}$\footnote{We can show that
    this measure is invariant under transformations of the computation
    tree.}.
\end{theorem}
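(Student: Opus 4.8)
The plan is to treat the two claims separately. For the polynomial-time computability of $\V{\Prb}{L_A,L_B}{M}$, I would invoke Lemma~\ref{lm:mc}, which reduces the problem to computing $\va_{G'}(\MP[p',r'])(s'_0)$ in the Markov chain $G'$ (together with the factor~$2$). Since $G'$ has no player-$1$ states, there is a single probability measure on its plays, so the value is just the expectation of $\MP[p',r']$ under that measure. Computing it splits into two independent tasks: (a)~deciding whether the parity condition holds almost surely, and (b)~computing the expected long-run average reward.

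For (a), I would compute the bottom strongly connected components (BSCCs) of $G'$ by an SCC decomposition. Almost surely a play is eventually confined to a single BSCC and visits each of its states infinitely often, so a BSCC $C$ satisfies $\Parity[p']$ iff $\min p'(C)$ is even. Since every BSCC reachable from $s'_0$ is reached with positive probability, the parity objective holds with probability~$1$ iff every reachable BSCC is even-dominated; this is a linear-time check, and if some reachable BSCC fails the value is $\bot$. For (b), within each BSCC $C$ the mean-payoff value is the stationary expected reward $g_C=\sum_{s\in C}\xi_C(s)\,r'(s)$, obtained by solving the linear system for the stationary distribution $\xi_C$; the expected mean-payoff from $s'_0$ is then $\sum_C \Prb(\text{reach }C)\cdot g_C$, where the reachability probabilities solve another linear system. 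All steps run in polynomial time, establishing the first claim.

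For the second claim I would work directly with the product Markov chain $H$ whose plays feed a uniform i.i.d.\ stream of input letters into $M$ and then into $B$; writing $X_i$ for the reward of $B$ on the $i$-th step, the quantity $\tfrac1n\sum_{i=1}^n X_i$ equals $L_B(\outcome_M(w^n))$ for the corresponding finite input word $w$. The crucial consequence of uniformity is that each length-$n$ word carries probability exactly $|\A_I|^{-n}$, so the finite average defining $\V{\oslash}{L_B}{M}$ is literally an expectation:
\[
\frac{1}{|\A_I|^n}\sum_{w\in\A_I^n} L_B(\outcome_M(w^n))
 =\Exp_\Prb\Big[\tfrac1n\textstyle\sum_{i=1}^n X_i\Big]
 =\tfrac1n\textstyle\sum_{i=1}^n \Exp_\Prb[X_i].
\]
Letting $n\to\infty$, the right-hand side is a Ces\`aro average of per-step expected rewards and converges to $g:=\sum_C \Prb(\text{reach }C)\cdot g_C$, so $\V{\oslash}{L_B}{M}=g$.

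For $\V{\Prb}{L_B}{M}=\Exp_\Prb[\liminf_n \tfrac1n\sum_{i=1}^n X_i]$ I would invoke the ergodic theorem for finite Markov chains: on the event that the play enters BSCC $C$, the average reward converges almost surely, as a genuine limit, to $g_C$, so the $\liminf$ equals $g_C$ almost surely. Since the rewards take finitely many values and are therefore bounded, bounded convergence lets me pull the expectation inside, giving $\Exp_\Prb[\liminf_n \cdots]=\Exp_\Prb[g_C]=g$, whence $\V{\Prb}{L_B}{M}=g=\V{\oslash}{L_B}{M}$. I expect the main obstacle to be precisely these two interchanges of limit and expectation — the Ces\`aro convergence of expected per-step rewards and the almost-sure convergence underlying the bounded-convergence step — together with the observation that uniformity is exactly what turns the finite-word average into an expectation; for a non-uniform $\Prb$ the weights $|\A_I|^{-n}$ in $\V{\oslash}{L_B}{M}$ would not match the word probabilities, and the two measures would in general differ.
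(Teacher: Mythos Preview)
Your proposal is correct and, for the first claim, follows the same line as the paper: invoke Lemma~\ref{lm:mc} to pass to the Markov chain $G'$, then observe that in a Markov chain the mean-payoff parity value is the mean-payoff value if the parity objective is almost-surely satisfied and $\bot$ otherwise, both of which are polynomial-time computable. The paper simply cites~\cite{CJH03,FV96} for these two computations, whereas you spell out the BSCC decomposition, the stationary-distribution linear systems, and the reachability linear systems; this is the standard unpacking of those citations and buys explicitness at no conceptual cost.

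For the second claim the two treatments genuinely diverge. The paper's proof does not really argue the equality $\V{\oslash}{L_B}{M}=\V{\Prb}{L_B}{M}$: it only remarks that $\va_{G'}(\Mean[r'])(s'_0)$ equals the Ces\`aro limit of occupation frequencies weighted by rewards, and relegates the rest to a footnote about invariance under computation-tree transformations. Your route---observing that uniformity turns the finite-word average into the expectation of the partial mean, then passing to the limit on one side by Ces\`aro convergence of $\Exp[X_i]$ and on the other by the ergodic theorem plus bounded convergence---is a self-contained argument that actually establishes the claim. What you gain is a real proof; what the paper's remark gains is brevity and a pointer to a more structural explanation (the computation-tree invariance). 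Your identification of the two limit--expectation interchanges as the only delicate points is accurate, and both are handled by the finiteness and boundedness of the chain.
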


\begin{proof}
  Due to Lemma~\ref{lm:mc}, we construct Markov chain $G'$, a reward
  function~$\reward'$, and a priority function~$p'$ such that
  $\V{\Prb}{L_A,L_B}{M}=\va_{G'}(\MP[p',r'])(s_0')$.
  Since $G'$ is a Markov chain, we can compute $W_{G'}(\Parity[p'])$ and
  $\va_{G'}(\Mean[r'])(s'_0)$ in polynomial time~\cite{CJH03,FV96}, and
  $\va_{G'}(\MP[p',r'])(s'_0)=\va_{G'}(\Mean[r'])(s'_0)$ if $s'_0\in
  W_{G'}$, and $\bot$ otherwise.  Note that the value
  $\va_{G'}(\Mean[r'])(s'_0)$ is the sum over all states $s$ of the
  reward at $s$ (i.e., $r'(s)$) times the long-run average frequency of being
  in $s$ (the Cesaro limit of being at $s$~\cite{FV96}).
\qed
\end{proof}

\begin{example}
  Recall the two system~$M_1$ and~$M_2$ (\figref\ref{fig:system1}
  and~\ref{fig:system2}, respectively) and the specification $A$
  (cf.~\figref\ref{fig:grant_fast}) that rewards quick responses.
 The two systems are equivalent wrt the worst case behavior.  Let us
  consider the average behavior: we build a Markov chain $G_{\oslash}$
  that assigns $1/4$ to all events in $2^{\set{\r_1,\r_2}}$. To
  measure $M_1$, we take the product between $G_{\oslash}$ and
  $M_1\times A$ (shown in \figref\ref{fig:prod-system1}).  The product
  looks like the automaton in \figref\ref{fig:prod-system1} with an
  intermediate state for each edge. This state is labeled with the
  reward of the edge. All transition leading to intermediate states
  have probability $1/2$, the other once have probability~$1$.
  So the expectation of being in a state is the same for all four main
  states (i.e., $1/8$) and half of it in the eight intermediate states
  (i.e., $1/16$).
  Four (intermediate) states have a reward of $1$, four have a reward
  of $2$. So we get a total reward of $4\cdot 1/16 + 4\cdot 2\cdot
  1/16=3/4$, and a system value of $1.5$.  This is expected when
  looking at \figref\ref{fig:prod-system1} because each state has two
  inputs resulting in a reward of~$2$ and two inputs with reward~$1$.
  For System~$M_2$, we obtain Markov chain similar to
  \figref\ref{fig:prod-system2} but now the probability of the
  transitions corresponding to the self-loops on the initial state sum
  up to $3/4$.  So it is more likely to state in the initial state,
  then to leave it.  The expectation for being in the states
  $(q_0,q_0,q_0)$,$(q_1,q_0,q_1)$, and $(q_2,q_1,q_0)$ are $2/3$,
  $2/9$, and $1/9$, respectively, and their expected rewards are
  $(2+2+2+1)/4=7/4$, $3/2$, and $3/2$, respectively.  So, the total
  reward of System~$M_2$ is $2/3\cdot 7/4 + 2/9\cdot 3/2 + 1/9\cdot
  3/2=1.67$, which is clearly better than the value of system $M_1$
  for specification $A$.
\end{example}

\section{Synthesizing Optimal Systems}
\label{sec:synthesis}
In this section, we show how to construct a system that satisfies a
qualitative specification and optimizing a quantitative specification
under a given probabilistic environment.  First, we reduce the problem
to finding an optimal strategy in an MDP for a mean-payoff (parity)
objective. Then, we show how to compute such a strategy using end
components and a reduction to max objective. \highlight{In this part,
  we also show how to decide if the given specification can be
  implemented by a finite-state system that is optimal. In case that
  the specification does not permit such an implementation, we show
  how to construct, for every $\epsilon>0$, a finite-state system that
  is $\epsilon$-optimal.} At the end of the section, we provide a
linear program that computes the value function of an MDP with max
objective, which shows that the value function of an MDP with
mean-payoff parity objective can be computed in polynomial time.

\subsection{Reduction to MDP with mean-payoff (parity) objectives}
\begin{lemma}
  \label{lm:mdp}
  Given a safety (resp.~parity) automaton~$A$, a mean-payoff
  automaton~$B$, and a labeled Markov chain $(G,\lambda_G)$ defining a
  probability measure $\Prb$ on $(\A_I^\omega,{\cal F})$, we can
  construct a labeled MDP $(G',\lambda_{G'})$ with $\gamegraph'
  =\tuple{S', s'_0, E', \SA',\SR',\trans'}$, a reward
  function~$\reward'$, and a priority function~$p'$ such that every
  pure strategy $\pi$ that is optimal from state $s'_0$ for the
  objective $\Mean[r']$ (resp. $\MP[p',r']$)  and for which
  $\Exp_{s'_0}^\pi(\Mean[r']) \ne\bot$ (resp.
  $\Exp_{s'_0}^\pi(\MP[p',r']) \ne\bot$) corresponds to a state
  machine~$M$ that satisfies $L_A$ under $\mu$ and optimizes $L_B$
  under $\mu$.
\end{lemma}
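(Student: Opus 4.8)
The plan is to reuse the two-phase product of Lemma~\ref{lm:mc}, but to \emph{leave the output open} instead of folding in a fixed system: the probabilistic input choices of $\gamegraph$ become the random states, and the output choices become the player-$\PA$ states. Concretely, form the product $Q=Q_A\times Q_B$ of the parity automaton $A$ and the mean-payoff automaton $B$ (both reading the joint alphabet $\A=2^{\I\cup\O}$) and interleave it with $\gamegraph$. From a random state $(q,s,0)$ the random player picks the next input by moving to $(q,s',1)$ with probability $\trans(s,s')$, so that $\lambda_G(s')$ is the current input letter; from the player-$\PA$ state $(q,s',1)$ Player~$\PA$ selects an output $o\in 2^{\O}$, the joint letter $\a=\lambda_G(s')\cup o$ advances the automata to $\atrans(q,\a)$, and control returns to phase~$0$. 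As in Lemma~\ref{lm:mc} I record $B$'s reward and $A$'s priority on these states (inserting one auxiliary state per output choice so that the reward $\reward_B(q,\a)$, which now depends on the chosen $o$, is carried by a state and is visible to $\Mean[r']$; priorities are copied from $A$). Setting $\SA'$ to the output states, $\SR'$ to the input states, and choosing $\lambda_{G'}$ so that random transitions carry the input and player-$\PA$ transitions carry the chosen output yields the required labeled MDP $(\gamegraph',\lambda_{G'})$, which is deterministic and complete by construction.

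I then read a system off a pure strategy. A pure strategy $\pi\colon (S')^{*}\SA'\to S'$ fixes, at each output state, an output letter as a function of the finite history, and by the construction that history records exactly the input/output behaviour produced so far; the output labels therefore define a state machine $M_\pi$ over input signals $\I$ and output signals $\O$ whose memory is the memory of $\pi$ (so memoryless $\pi$ give one-state-per-MDP-state systems and finite-state $\pi$ give finite-state systems, while an infinite-memory $\pi$, as needed for $\MP$, yields an infinite-state $M_\pi$). Conversely every deterministic system $M'$ over $\I,\O$ induces a pure strategy $\pi_{M'}$ that resolves each output choice by $M'$. Here determinism and completeness of $(\gamegraph',\lambda_{G'})$ are used: every output is available and has a unique successor, so $\pi\mapsto M_\pi$ is a well-defined bijection with inverse $M'\mapsto\pi_{M'}$.

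The heart of the argument is that this correspondence is measure preserving. Under a pure $\pi$ all of Player~$\PA$'s choices are deterministic, so the only randomness in the walk $\pat_{s'_0}^{\pi}$ comes from $\gamegraph$; the play produced by an input word $\w$ thus has the same probability as $\w$ under $\Prb$, and its output labels spell out $\outcome_{M_\pi}(\w)$. Consequently the distribution over plays is the pushforward of $\Prb$ under $\w\mapsto\outcome_{M_\pi}(\w)$, and along each play the $\Mean[r']$-value equals $c\cdot L_B(\outcome_{M_\pi}(\w))$ for the fixed factor $c>0$ coming from the zero-reward phases (exactly as the factor in Lemma~\ref{lm:mc}), while the parity condition for $p'$ holds iff $L_A(\outcome_{M_\pi}(\w))=1$. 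Taking expectations, $\Exp_{s'_0}^{\pi}[\MP[p',r']]\ne\bot$ iff $A$'s parity condition is met with probability~$1$, i.e.\ iff $M_\pi\models_\mu L_A$, and in that case $\Exp_{s'_0}^{\pi}[\MP[p',r']]=c\cdot\V{\Prb}{L_B}{M_\pi}$.

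Finally I combine these facts. If $\pi$ is optimal for $\MP[p',r']$ with $\Exp_{s'_0}^{\pi}\ne\bot$, the previous paragraph gives $M_\pi\models_\mu L_A$ and $\V{\Prb}{L_B}{M_\pi}=\tfrac1c\,\va_{\gamegraph'}(\MP[p',r'])(s'_0)$. For any system $M'$ with $M'\models_\mu L_A$ we have $c\cdot\V{\Prb}{L_B}{M'}=\Exp_{s'_0}^{\pi_{M'}}[\MP[p',r']]\le\va_{\gamegraph'}(\MP[p',r'])(s'_0)=\Exp_{s'_0}^{\pi}[\MP[p',r']]$, since the value is the supremum over all strategies; for $M'\not\models_\mu L_A$ the value is $\bot$ and is dominated. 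Because $c>0$, dividing through shows $\V{\Prb}{L_B}{M_\pi}\ge\V{\Prb}{L_B}{M'}$ for all $M'$, i.e.\ $M_\pi$ optimizes $L_B$ under $\mu$. The safety case is the specialization in which $A$ has priorities only in $\set{0,1}$; as in Lemma~\ref{lm:mc} one sets $r'=\bot$ on priority-$1$ states so that the parity condition is absorbed into $\Mean[r']$ and no separate parity reasoning is needed, and completeness of the Markov chain makes $M\models_\mu L_A$ and $M\models L_A$ coincide. The main obstacle is precisely this measure-preserving step together with the $\bot$-bookkeeping: one must check carefully that ``expectation $\ne\bot$'' captures \emph{almost-sure} parity satisfaction (hence $M_\pi\models_\mu L_A$) and that the bijection remains faithful for infinite-memory strategies, since the optimal $\MP$ strategy---and thus the optimal system---may need infinite memory.
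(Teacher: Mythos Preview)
Your proposal is correct and follows essentially the same approach as the paper: a two-phase product of $A\times B$ with the Markov chain $\gamegraph$, where random states choose the input and player-$\PA$ states choose the output, with priorities copied from $A$ and rewards from $B$. The paper phrases the output-to-state step via an auxiliary automaton $C$ with $|\A_O|$ states that ``pushes the output labels from transitions to states,'' which is exactly your ``one auxiliary state per output choice'' device; beyond that the paper gives only a sketch, so your measure-preserving and optimality arguments in fact supply details the paper omits.

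One small overstatement: calling $\pi\mapsto M_\pi$ a \emph{bijection} is not quite right at the level of concrete machines (different internal state sets may realise the same input--output map), though it is at the level of behaviours, and only the direction ``optimal $\pi$ yields optimal $M_\pi$'' together with ``every $M'$ induces some $\pi_{M'}$'' is needed for the conclusion.
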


The construction of~$G'$ is very similar to the construction used in
Lemma~\ref{lm:mc}.  Intuitively, $G'$ alternates between mimicking a
move of $G$ and mimicking a move of $A\times B \times C$, where $C$ is
an automaton with $|\A_O|$-states that pushes the output labels from
transitions to states, i.e., the transition function~$\delta_C$ of~$C$
is the largest transition function s.t.\ $\forall s,s',\a,\a':
\delta_C(s,\a)=\delta_C(s',\a') \to \a=\a'$. Priorities~$p'$ are again
copied from $A$ and rewards~$r'$ from~$B$. The labels for
$\lambda_{G'}$ are either taken from $\lambda_G$ (in intermediate
state) or they correspond to the transitions taken in~$C$.  Every pure
strategy in $G'$ fixes one output value for every possible input
sequence.  The construction of the state machine depends on the
structure of the strategy.  For pure memoryless strategies, the
construction is straight forward.  At the end of this section, we
discuss how to deal with other strategies.

The following theorem follows from Lemma~\ref{lm:mdp} and the fact
that MDPs with mean-payoff objective have pure memoryless optimal
strategies and they can be computed in polynomial time
(cf.~\cite{FV96}).

\begin{theorem}
  Given a safety automaton~$A$, a mean-payoff automaton~$B$, and a
  labeled Markov chain $(G,\lambda_G)$ defining a probability measure
  $\Prb$, we can construct a finite-state machine~$M$ (if one exists) 
  in polynomial time 
  that satisfies\footnote{Recall that for safety specification
    $M\models_{\mu} L_A$ and $M\models L_A$ coincide.}  $L_A$ under
  $\mu$ and optimizes $L_B$ under $\mu$. 
\end{theorem}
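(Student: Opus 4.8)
The plan is to combine the reduction of Lemma~\ref{lm:mdp} with the classical theory of mean-payoff MDPs. By Lemma~\ref{lm:mdp}, from the safety automaton~$A$, the mean-payoff automaton~$B$, and the labeled Markov chain $(G,\lambda_G)$, we construct the labeled MDP $(G',\lambda_{G'})$ together with a reward function~$r'$, with the property that any pure optimal strategy~$\pi$ for the mean-payoff objective $\Mean[r']$ satisfying $\Exp_{s'_0}^\pi(\Mean[r'])\ne\bot$ corresponds to a finite-state machine~$M$ that satisfies $L_A$ under~$\mu$ and optimizes $L_B$ under~$\mu$. So it suffices to compute such a strategy in polynomial time. (Here I use that for safety specifications $M\models_\mu L_A$ and $M\models L_A$ coincide, as noted in the paper, so the safety constraint is already folded into~$G'$ via the rewriting of $r'$ to~$\bot$ on priority-$1$ states.)

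First I would invoke the standard result (cf.~\cite{FV96}) that MDPs with a mean-payoff objective admit pure memoryless optimal strategies, and that such a strategy together with the optimal value function can be computed in polynomial time, e.g.\ by linear programming. Applying this to $G'$ with reward~$r'$ yields a pure memoryless optimal strategy~$\pi$ and the value $\va_{G'}(\Mean[r'])(s'_0)$ in polynomial time. Second, I would check whether $\Exp_{s'_0}^\pi(\Mean[r'])\ne\bot$: since the $\bot$ rewards on $G'$ encode exactly the violation of the safety objective, this amounts to testing whether the optimal value is finite (equivalently, whether~$\pi$ avoids the $\bot$-labeled states with probability~$1$). If the value is~$\bot$ for every strategy, then no system satisfies $L_A$ under~$\mu$, and we report that none exists; this is the ``if one exists'' clause of the statement. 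Otherwise, $\pi$ is a pure optimal strategy with non-$\bot$ expectation, so by Lemma~\ref{lm:mdp} it corresponds to the desired finite-state machine~$M$.

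Finally, because~$\pi$ is memoryless, the construction of the state machine from~$\pi$ is the straightforward case already described after Lemma~\ref{lm:mdp}: one reads off, for each MDP state, the output value fixed by~$\pi$, obtaining a finite-state system directly. The key point making the whole procedure polynomial is that every step---the construction of $G'$ (polynomial in the sizes of $A$, $B$, and $G$), the LP-based computation of the optimal mean-payoff strategy, and the extraction of~$M$ from a memoryless strategy---runs in polynomial time.

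I expect the only subtle point to be the handling of the $\bot$-value / existence question, namely correctly identifying when the safety specification is unsatisfiable under~$\mu$ and ensuring the computed memoryless strategy genuinely avoids the $\bot$-rewards almost surely. This is where the equivalence of $M\models_\mu L_A$ and $M\models L_A$ for safety (guaranteed by completeness of the Markov chain) is essential: it lets us encode safety purely through the reward structure rather than as a separate almost-sure constraint, so that optimizing the mean-payoff value over pure strategies simultaneously enforces correctness. The remaining steps are routine given the cited results on mean-payoff MDPs.
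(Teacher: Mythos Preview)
Your proposal is correct and follows essentially the same argument as the paper: invoke Lemma~\ref{lm:mdp} to reduce the problem to an MDP with a mean-payoff objective, and then appeal to the classical fact (cf.~\cite{FV96}) that such MDPs admit pure memoryless optimal strategies computable in polynomial time. The paper states the theorem as an immediate consequence of these two ingredients, and your additional remarks on handling the $\bot$ case and extracting the finite-state machine from the memoryless strategy simply spell out details the paper leaves implicit.
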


\newcommand\myparagraph[1]{\paragraph{\bf #1}}

\subsection{ MDPs with mean-payoff parity objectives}
\label{sec:meanpayoffparity}
It follows from Lemma~\ref{lm:mdp} that if the qualitative specification
is a parity automaton, along with the Markov chain for probabilistic 
input assumption, and mean-payoff automata for quantitative specification, 
then the solution reduces to solving MDPs with mean-payoff parity 
objective. 
In the following we provide an algorithmic solution of MDPs with
mean-payoff parity objective.
We first present few basic results on MDPs.


\emph{End components} in MDPs~\cite{deAlfaro97,CY90} play a role
equivalent to closed recurrent sets in Markov chains.  Given an MDP
$\gamegraph =\tuple{S, s_0, E, \SA,\SR,\trans}$ , a set $U \subseteq
S$ of states is an \emph{end component}~\cite{deAlfaro97,CY90} if $U$
is $\trans$-closed (i.e., for all $s \in U \cap \SR$ we have $E(s)
\subseteq U$) and the sub-game of $G$ restricted to $U$ (denoted
$G \obciach U$) is strongly connected.  We denote by $\cale(G)$ the
set of end components of an MDP $G$.  The following lemma states that,
given any strategy (memoryless or not), with probability 1 the set of
states visited infinitely often along a play is an end component.
This lemma allows us to derive conclusions on the (infinite) set of
plays in an MDP by analyzing the (finite) set of end components in the
MDP.

\begin{lemma}\label{lemm:end-component}
  \cite{deAlfaro97,CY90} Given an MDP $G$, for all states $s \in S$
  and all strategies $\straa \in \Straa$, we have
  $\Prb_s^{\straa}(\set{\pat \mid \Inf(\pat) \in \cale(G)})=1$.
\end{lemma}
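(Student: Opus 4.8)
The lemma states: Given an MDP $G$, for all states $s$ and all strategies $\straa$, the probability that the set of infinitely-often visited states forms an end component is 1.

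Let me think about how to prove this.The plan is to show that the set $\Inf(\pat)$ of states visited infinitely often is, with probability~1, both $\trans$-closed and strongly connected when restricted to the subgame $G \obciach \Inf(\pat)$; these are exactly the two defining conditions of an end component from the excerpt. I would fix an arbitrary state $s$ and strategy $\straa$ and work inside the probability space of the random walk $\pat_s^\straa$. The argument naturally splits into the two conditions, and the key technical device in each case is a Borel--Cantelli-style observation: if a ``bad'' transition has positive probability every time a certain finite prefix recurs, then it is taken infinitely often almost surely.

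For $\trans$-closure, I would argue as follows. Fix a probabilistic state $r \in \SR$ and a successor $t \in E(r)$ (so $\trans(r,t)>0$). Consider the event that $r$ is visited infinitely often but $t$ is visited only finitely often. On this event, after some finite time $r$ recurs infinitely but every visit to $r$ is followed by a non-$t$ successor. Since at each visit to $r$ the random player moves to $t$ with probability exactly $\trans(r,t)>0$ independently of the past (the transition function depends only on the current probabilistic state), the conditional probability of never again reaching $t$ after infinitely many visits to $r$ is $\lim_{k\to\infty}(1-\trans(r,t))^k = 0$. Hence $\Prb_s^\straa(r \in \Inf(\pat) \wedge t \notin \Inf(\pat)) = 0$. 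Taking a union over the finitely many such pairs $(r,t)$ with $r\in\SR$ shows that almost surely, for every probabilistic state in $\Inf(\pat)$, all its $E$-successors are also in $\Inf(\pat)$; i.e., $\Inf(\pat)\cap\SR$ is $\trans$-closed, which is precisely the closure condition required.

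For strong connectivity of $G\obciach\Inf(\pat)$, I would fix two states $u,v \in S$ and consider the event that both $u$ and $v$ lie in $\Inf(\pat)$ but $v$ is \emph{not} reachable from $u$ within the set $\Inf(\pat)$. The idea is that from $u$ there is always some finite path (of length at most $|S|$) staying inside $S$ that, given the strategy's choices and the closure just established, has positive probability of reaching $v$; since $u$ recurs infinitely often and each recurrence offers an independent-enough positive-probability chance to then reach $v$, the event that $v$ is never reached after $u$ recurs infinitely has probability~$0$ by the same geometric-decay argument. A union bound over the finitely many pairs $(u,v)$ then gives that almost surely every pair of states in $\Inf(\pat)$ is mutually reachable within $\Inf(\pat)$, which is strong connectivity of the restricted subgame.

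The main obstacle, and the step requiring the most care, is the strong-connectivity half: unlike the closure step, the ``positive probability of reaching $v$ from $u$'' bound must be made uniform across the infinitely many recurrences of $u$ and must be combined correctly with a possibly history-dependent strategy, so that the per-visit success probabilities are bounded below by a common constant $\eta>0$ (for instance $\eta = p_{\min}^{|S|}$, where $p_{\min}$ is the least positive transition probability). One must phrase this via the strong Markov property at the successive return times to $u$, noting that although $\straa$ may depend on the full history, after each return to $u$ there remains a continuation consistent with $\straa$ that reaches $v$ with probability at least $\eta$; the second Borel--Cantelli lemma (or a direct martingale/stopping-time argument) then forces $v$ to actually be visited after $u$, infinitely often, almost surely. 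Once both conditions are established on a co-null event, $\Inf(\pat)$ satisfies both defining properties of an end component, giving $\Prb_s^\straa(\set{\pat \mid \Inf(\pat)\in\cale(G)})=1$. \qed
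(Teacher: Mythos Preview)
The paper does not supply its own proof of this lemma; it is quoted with a citation to \cite{deAlfaro97,CY90}, so there is no in-paper argument to compare against. Your $\trans$-closure step is correct: at each visit to a probabilistic state $r$ the transition to any $t\in E(r)$ occurs with conditional probability exactly $\trans(r,t)>0$ regardless of the history (the strategy has no influence at probabilistic states), and a L\'evy/Borel--Cantelli argument then forces $t\in\Inf(\pat)$ almost surely whenever $r\in\Inf(\pat)$.

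Your strong-connectivity step, however, is both overcomplicated and slightly off target. You correctly identify the goal---that $v$ is reachable from $u$ \emph{inside} $\Inf(\pat)$---but the probabilistic argument you sketch only establishes that $v$ is visited after $u$, which is already immediate from $v\in\Inf(\pat)$ and says nothing about where the connecting path lives. The uniform lower bound $\eta=p_{\min}^{|S|}$ on the probability of reaching $v$ from $u$ \emph{via a path contained in $\Inf(\pat)$} is not available in general: the (history-dependent) strategy may well route the play through states outside $\Inf(\pat)$ between a given visit to $u$ and the next visit to $v$, so the per-visit success probability for the event you actually need could be zero at many recurrences. The standard (and much simpler) argument is a pure sample-path observation: since $S$ is finite, every state in $S\setminus\Inf(\pat)$ is visited only finitely often, so almost surely there is a time $N$ after which the play remains in $\Inf(\pat)$ forever. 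Pick any visit to $u$ at time $n\ge N$ and the subsequent visit to $v$ at time $m>n$; the finite segment $\pat_n\pat_{n+1}\cdots\pat_m$ is then a path from $u$ to $v$ lying entirely in $\Inf(\pat)$, witnessing reachability in $G\obciach\Inf(\pat)$. By symmetry the same holds from $v$ to $u$, and a finite union bound over pairs $(u,v)$ gives strong connectivity almost surely---no Borel--Cantelli or uniform $\eta$ is needed for this half.
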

 
For an end component $U \in \cale(G)$, consider the memoryless
strategy $\straa_U$ that plays in any state $s$ in $U \cap \SA$ all
edges in $E(s) \cap U$ uniformly at random.  Given the strategy
$\straa_U$, the end component $U$ is a closed connected recurrent set
in the Markov chain obtained by fixing $\straa_U$.

\begin{lemma}\label{lemm:end-component1}
  Given an MDP $G$ and an end component $U \in \cale(G)$, the strategy
  $\straa_U$ ensures that for all states $s \in U$, we have
  $\Prb_s^{\straa_U}(\set{\pat \mid \Inf(\pat)=U})=1$.
\end{lemma}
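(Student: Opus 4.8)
The plan is to observe that fixing the memoryless strategy $\straa_U$ turns $G$ into a Markov chain, and then to argue that, started anywhere in $U$, this chain lives entirely on $U$ and is irreducible there; standard finite-Markov-chain theory then gives that every state of $U$ is recurrent and hence visited infinitely often almost surely.

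First I would check that $\straa_U$ is well defined: strong connectivity of $G \obciach U$ guarantees $E(s) \cap U \neq \emptyset$ for every $s \in U \cap \SA$, so the uniform distribution over $E(s) \cap U$ makes sense, and fixing $\straa_U$ yields a Markov chain $G_U$. The key containment step is that no play started in $U$ ever leaves $U$: for $s \in U \cap \SR$ we have $E(s) \subseteq U$ by $\trans$-closure, and for $s \in U \cap \SA$ the strategy $\straa_U$ assigns positive probability only to edges inside $U$. Hence, for every $s \in U$, the set of states reachable under $\straa_U$ is contained in $U$, so in particular $\Inf(\pat) \subseteq U$ with probability $1$.

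Next I would identify the support of the transition relation of $G_U$ restricted to $U$ with the edge set of $G \obciach U$. For a player-$1$ state the uniform choice puts positive probability on exactly the edges in $E(s) \cap U$; for a probabilistic state $s$ we have $\trans(s)(t) > 0$ iff $(s,t) \in E$. Since $G \obciach U$ is strongly connected, for any $s,t \in U$ there is a path from $s$ to $t$ in $G \obciach U$, and this path has positive probability in $G_U$. Thus $G_U$ restricted to $U$ is irreducible, i.e.\ a unichain whose recurrence class is all of $U$.

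Finally I would invoke the textbook fact that in a finite irreducible Markov chain every state is recurrent and, with probability $1$, visited infinitely often. Combined with the containment step this yields $\Inf(\pat) = U$ almost surely for every starting state $s \in U$, that is, $\Prb_s^{\straa_U}(\set{\pat \mid \Inf(\pat) = U}) = 1$. The only mild subtlety, and the step I would be most careful about, is the two-sided argument pinning down $\Inf(\pat)$: $\trans$-closure plus the definition of $\straa_U$ give $\Inf(\pat) \subseteq U$, while irreducibility gives $\Inf(\pat) \supseteq U$; the recurrence conclusion itself is a standard property of finite Markov chains and needs no separate work.
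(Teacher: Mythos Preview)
Your proposal is correct and follows essentially the same approach as the paper. The paper itself does not give a proof of this lemma; it only offers the one-sentence justification preceding the statement, namely that under $\straa_U$ the set $U$ becomes a closed connected recurrent set in the resulting Markov chain, which is exactly what you unpack in detail (closure from $\trans$-closedness plus the definition of $\straa_U$, irreducibility from strong connectivity of $G\obciach U$, and then the standard recurrence conclusion for finite irreducible chains).
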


It follows that the strategy $\straa_U$ ensures that from any
starting state $s$, any other state $t$ is reached in finite time
with probability~1.  From Lemma~\ref{lemm:end-component1} we can
conclude that in an MDP the value for mean-payoff parity objectives
can be obtained by computing values for end-components and then
applying the maximal expectation to reach the values of the end
components.

\begin{lemma}\label{lemm:key}
  Consider an MDP $G$ with state space $S$, a priority function $p$,
  and reward function $r$ such that (a)~$G$ is an end-component (i.e.,
  $S$ is an end component) and (b)~the minimum priority in~$S$ is
  even.  Then the value for mean-payoff parity objective for all
  states coincide with the value for mean-payoff objective, i.e., for
  all states $s$ we have $\va_G(\MP[p,r])(s)=\va_G(\Mean[r])(s)$.
\end{lemma}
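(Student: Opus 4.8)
The plan is to prove the two inequalities separately. Since $\MP[p,r](\pat)$ equals $\Mean[r](\pat)$ whenever the parity condition holds and is $\bot$ (the bottom value) otherwise, we have $\MP[p,r](\pat) \le \Mean[r](\pat)$ for every play $\pat$; taking expectations and then the supremum over strategies gives $\va_G(\MP[p,r])(s) \le \va_G(\Mean[r])(s)$ for all $s$ at once. The content of the lemma is therefore the reverse inequality $\va_G(\MP[p,r])(s) \ge \va_G(\Mean[r])(s)$.

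For the setup, let $\pi^*$ be a pure memoryless strategy that is optimal for the mean-payoff objective $\Mean[r]$ (such strategies exist for mean-payoff objectives in MDPs, as recalled in the preliminaries), and write $\nu^* = \va_G(\Mean[r])(s)$. Since $S$ is a single end component, $G$ is strongly connected, so a finite prefix cannot change the $\liminf$-average and $\nu^*$ does not depend on the starting state. Let $c^* = \min_{t \in S} p(t)$ be the global minimum priority, which is even by hypothesis~(b), and fix a state $s^*$ with $p(s^*) = c^*$.

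The main idea is to combine $\pi^*$ with a strategy that forces the good priority to recur, interleaving them in phases of increasing length. Concretely, I would build an infinite-memory, counter-based strategy $\pi_\epsilon$ that, in its $k$-th phase, first plays $\pi^*$ for $N_k$ steps and then switches to the uniform end-component strategy $\straa_S$ of Lemma~\ref{lemm:end-component1} until $s^*$ is hit, whereupon phase $k+1$ begins. By Lemma~\ref{lemm:end-component1} the state $s^*$ is reached from any state in finite expected time under $\straa_S$, so every phase terminates almost surely and there are infinitely many phases. Consequently $s^*$, and hence the priority $c^*$, occurs infinitely often with probability~$1$; since $c^*$ is the global minimum, $\min(p(\Inf(\pat))) = c^*$ is even on almost every play, i.e.\ $\Parity[p](\pat) = 1$ almost surely. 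Therefore $\MP[p,r] = \Mean[r]$ almost surely under $\pi_\epsilon$, and $\Exp_s^{\pi_\epsilon}[\MP[p,r]] = \Exp_s^{\pi_\epsilon}[\Mean[r]]$ is a genuine real number rather than $\bot$.

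It then remains to show that the detours do not degrade the long-run average: by choosing the phase lengths $N_k$ to grow fast enough that the expected fraction of steps spent in the $\straa_S$-detours tends to $0$ (using that rewards are bounded on the finite state space and that the expected hitting time of $s^*$ under $\straa_S$ is bounded), I would argue $\Exp_s^{\pi_\epsilon}[\Mean[r]] \ge \nu^* - \epsilon$. Letting $\epsilon \to 0$ then yields $\va_G(\MP[p,r])(s) \ge \nu^*$, which together with the first paragraph gives equality. I expect this last quantitative step to be the main obstacle, precisely because the mean-payoff value is defined through a $\liminf$ of running averages: each detour temporarily depresses the running average, and one must verify that with $N_k$ growing the $\liminf$ is not pulled below $\nu^* - \epsilon$ on almost every play. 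I would handle this by estimating the running average at the ends of the long $\pi^*$-phases, where it is close to $\nu^*$ by the ergodic behaviour of $\pi^*$ within the recurrent classes of the Markov chain induced by $\pi^*$, and then bounding the transient dip caused by each bounded-length detour, so that the worst running average within phase $k$ stays above $\nu^* - \epsilon$ for all large $k$.
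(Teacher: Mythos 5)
Your proposal is essentially the paper's own proof: the paper likewise fixes a pure memoryless optimal mean-payoff strategy and a pure memoryless strategy that reaches a minimum-(even-)priority state, and interleaves them in rounds whose mean-payoff portions grow in length, so that the parity objective holds almost surely while the detours become asymptotically negligible for the $\liminf$-average. The only notable differences are cosmetic or organizational: the paper puts the reaching stage at the start of each round rather than the end, uses a stochastic-shortest-path strategy instead of the uniform end-component strategy, and controls the dip you identify as the main obstacle by letting the length $\ell_i$ of the $i$-th mean-payoff stage depend on the realized length $k_i$ of the preceding reaching stage (taking $\ell_i \geq i\cdot k_i\cdot \beta$), which replaces your appeal to detours of ``bounded length'' (they are only almost surely finite) with an explicit pathwise bound.
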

\begin{proof}
  We consider two pure memoryless strategies: one for the mean-payoff
  objective and one for reaching the minimum priority objective
  and combine them to produce the value for mean-payoff parity
  objective.  Consider a pure memoryless optimal strategy $\straa_m$
  for the mean-payoff objective; and the strategy $\straa_S$ is a pure
  memoryless strategy for the stochastic shortest path to reach the
  states with the minimum priority (and the priority is even).
  Observe that under the strategy $\straa_S$ we obtain a Markov chain
  such that every closed recurrent set in the Markov chain contains
  states with the minimum priority, and hence from any state $s$ a
  state with the minimum priority (which is even) is reached in
  finite time with probability~1.  The mean-payoff value for all
  states $s \in S$ is the same: if we fix the memoryless strategy $\straa_u$ 
  that chooses all successors uniformly at random, then we get a Markov 
  chain as the whole set $S$ as a closed recurrent set, and hence from
  all states $s\in S$ any state $t \in S$ is reached in finite time
  with probability~1, and hence the mean-payoff value at $s$ is at
  least the mean-payoff value at $t$.  It follows that for all $s,t\in
  S$ we have $\va(\Mean[r])(s)=\va(\Mean[r])(t)$, and let us denote
  the uniform value by $v^*$. The strategy $\straa_m$ is a pure
  memoryless strategy and once it is fixed we obtain a Markov chain
  and the limit of the average frequency of the states exists and
  since $\straa_m$ is optimal it follows that for all states $s \in S$
  we have
  \[
  \lim_{n \to \infty} \frac{1}{n} \cdot \sum_{i=1}^n
  \Exp_s^{\straa_m}[r(\theta_i)] = v^*,
   \] 
  where $\theta_i$ is the random variable for the $i$-th state of a
  path.  Hence the strategy $\straa_m$ ensures that for any
  $\epsilon>0$, there exists $j(\epsilon) \in \nats$ such that if
  $\straa_m$ is played for any $\ell \geq j(\epsilon)$ steps then the
  expected average of the rewards for $\ell$ steps is within
  $\epsilon$ of the mean-payoff value of the MDP, i.e., for all $s \in
  S$, for all $\ell \geq j(\epsilon)$ we have
  \[
  \frac{1}{\ell}\cdot \sum_{i=0}^\ell \Exp_s^{\straa_m}[r(\theta_i)]
  \geq v^*-\epsilon.
   \]
  Let $\beta$ be the maximum absolute value of the rewards.  The
  optimal strategy for mean-payoff objective is played in rounds, and
  the strategy for round $i$ is as follows:
  \begin{enumerate}
  \item \emph{Stage 1.} First play the strategy $\straa_S$ till the
    minimum priority state is reached.
  \item \emph{Stage 2.} Let $\epsilon_i=1/i$.  If the game was in the
    first stage in this ($i$-th round) for $k_i$ steps, then play the
    strategy $\straa_m$ for $\ell_i$ steps such that $\ell_i \geq
    \max\set{j(\epsilon_i), i \cdot k_i\cdot \beta}$.  This ensures
    that the expected average of the rewards in round $i$ is at least

    \[
    \begin{array}{rcl}
      \displaystyle
      \frac{\ell_i\cdot(v^*-\epsilon_i)}{k_i + \ell_i} & = & 
      \displaystyle 
      \frac{(\ell_i+k_i)\cdot v^*-k_i\cdot v^* - \ell_i\cdot\epsilon_i} 
      {k_i + \ell_i} \\[2ex]
      & \geq & 
     \displaystyle 
      v^*-\frac{\ell_i\cdot\epsilon_i + k_i\cdot v^*}{\ell_i+k_i} \\[2ex]
     & \geq & 
      \displaystyle 
      v^*-\frac{\ell_i\cdot\epsilon_i + k_i\cdot \beta}{\ell_i+k_i} 
      \quad (\text{since } v^* \leq \beta)\\[2ex]
      & \geq & 
      \displaystyle 
      v^* -\frac{\ell_i \cdot \epsilon_i + k_i \cdot \beta}{\ell_i}\\[2ex]
      & = & 
      \displaystyle 
      v^* - \epsilon_i - \frac{k_i\cdot \beta}{\ell_i} \\[2ex]
      & \geq & 
      \displaystyle 
      v^*-\frac{1}{i} - \frac{1}{i}= v^* -\frac{2}{i}.
    \end{array}
    \]
Then the strategy proceeds to round $i+1$.
  \end{enumerate}

  The strategy ensures that there are infinitely many
  rounds, and hence the minimum priority that is visited infinitely
  often with probability~1 is the minimum priority of the end
  component (which is even).  This ensures that the parity objective
  is satisfied with probability~1.  The above strategy ensures that
  the value for the mean-payoff parity objective is
  \[
  \lim\inf_{i \to \infty} (v^*-\frac{2}{i}) = v^*.
  \]
  This completes the proof.
  \qed
\end{proof}

Lemma~\ref{lemm:key} shows that in an end component if the minimum priority
is even, then the value for mean-payoff parity and mean-payoff
objective coincide if we consider the sub-game restricted to the end
component.
\highlight{The strategy constructed in Lemma~\ref{lemm:key} requires infinite memory 
and in the following lemma we show that for all $\epsilon>0$,
the $\epsilon$-approximation can be achieved with finite memory strategies.

\begin{lemma}\label{lemm:key-approx}
  Consider an MDP $G$ with state space $S$, a priority function $p$,
  and reward function $r$ such that (a)~$G$ is an end-component (i.e.,
  $S$ is an end component) and (b)~the minimum priority in~$S$ is
  even.  Then for all $\epsilon>0$ there is a finite-memory strategy 
  $\straa_\epsilon$ for which the mean-payoff parity objective value for all 
  states is within $\epsilon$ of the value for the mean-payoff objective,
  i.e., for all states $s$ we have $\Exp_s^{\straa_\epsilon}[\MP[p,r]]\geq 
  \va_G(\Mean[r])(s)-\epsilon$.
\end{lemma}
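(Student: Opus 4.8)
The plan is to reuse the two pure memoryless strategies $\straa_m$ (optimal for $\Mean[r]$) and $\straa_S$ (reaching a minimum-priority state almost surely in finite expected time) from the proof of Lemma~\ref{lemm:key}, but to combine them in rounds of \emph{bounded} rather than growing length. Let $v^*=\va_G(\Mean[r])(s)$ be the uniform mean-payoff value, let $\beta$ be the maximum absolute reward, and let $K$ be the maximum over all states of the expected number of steps $\straa_S$ needs to reach a minimum-priority state (finite, since the induced chain reaches such a state with probability~$1$ from every state). Recall the threshold $j(\cdot)$ from Lemma~\ref{lemm:key}: for every $\ell\ge j(\delta)$ and every start state, $\straa_m$ played for $\ell$ steps yields expected average reward at least $v^*-\delta$.

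First I would define $\straa_\epsilon$ to play in identical rounds, each consisting of a \emph{Stage~1} phase in which $\straa_S$ is played until a minimum-priority state is reached, followed by a \emph{Stage~2} phase in which $\straa_m$ is played for a \emph{fixed} number $N$ of steps, after which the next round begins. I would set $N=\max\set{\,j(\epsilon/2),\ \lceil 4K\beta/\epsilon\rceil\,}$. This strategy is finite-memory: its memory is one stage bit together with a counter over $\set{0,\dots,N}$ for Stage~2 (Stage~1 needs no counter, since its termination is detected from the current state's priority). In contrast to Lemma~\ref{lemm:key}, $N$ is a fixed constant, so the memory is bounded.

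Next I would verify the parity part: Stage~1 is entered infinitely often and, with probability~$1$, reaches a minimum-priority state each time; since this priority is even, the parity objective holds almost surely, whence $\MP[p,r]=\Mean[r]$ almost surely and $\Exp_s^{\straa_\epsilon}[\MP[p,r]]=\Exp_s^{\straa_\epsilon}[\Mean[r]]$. For the mean-payoff bound I would treat each round as a regeneration cycle and apply the renewal-reward theorem to the finite Markov chain induced by $\straa_\epsilon$: the long-run average reward equals the expected reward per round divided by the expected length per round. Writing $\bar k\le K$ for the expected Stage~1 length, the expected length per round is $\bar k+N$ and the expected reward per round is at least $-\beta\bar k+N(v^*-\epsilon/2)$ (Stage~2 contributes at least $N(v^*-\epsilon/2)$ since $N\ge j(\epsilon/2)$, and Stage~1 at least $-\beta\bar k$). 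Hence
\[
\Exp_s^{\straa_\epsilon}[\Mean[r]]\ \ge\ \frac{-\beta\bar k+N(v^*-\epsilon/2)}{\bar k+N},
\]
and a short calculation shows the right-hand side is at least $v^*-\epsilon$ whenever $N\ge 2\bar k(v^*-\epsilon+\beta)/\epsilon$, which holds by our choice of $N$ because $\bar k\le K$ and $v^*\le\beta$.

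I expect the main obstacle to be the passage from these per-round expectations to the long-run average, i.e.\ justifying the renewal-reward step cleanly: Stage~1 has random (though finitely-expected) length and its entry state varies from round to round, so the rounds are not i.i.d. The clean way is to observe that $\straa_\epsilon$ induces a finite Markov chain on $S\times\text{memory}$ whose recurrent behaviour regenerates at the entry points of Stage~1; the ergodic theorem for finite Markov chains then gives that the almost-sure limit average equals the stationary expectation, which the renewal-reward identity evaluates to the ratio above. The remaining reward and priority bookkeeping is routine.
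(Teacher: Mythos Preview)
Your proposal is correct, but the paper takes a slightly different (and arguably cleaner) route. The key difference is how Stage~1 is organised: you play $\straa_S$ \emph{until} a minimum-priority state is hit (a random number of steps with finite expectation $K$), whereas the paper plays $\straa_S$ for a \emph{fixed} number $n=|S|$ of steps, whether or not such a state is reached. In the paper's version every round therefore has deterministic length $n+\ell$, with $\ell\ge\max\{j(\epsilon),\,n\beta/\epsilon\}$, and the expected-average-per-round calculation yields $v^*-2\epsilon$ by exactly the arithmetic of Lemma~\ref{lemm:key} with the constants $k_i\leftarrow n$ and $\epsilon_i\leftarrow\epsilon$; no renewal-reward argument is needed because the round length is not random. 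The price is that Stage~1 no longer guarantees a visit to a minimum-priority state; instead the paper observes that each round is a ``success'' with probability at least $\eta^n>0$ (where $\eta$ is the minimum positive transition probability, so that $\straa_S$ has a path of length at most $n$ to a minimum-priority state realised with probability at least $\eta^n$), and concludes by a Borel--Cantelli-type argument that infinitely many rounds succeed almost surely, giving the parity condition.

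So the trade-off is: your variable-length Stage~1 makes the parity argument trivial but forces you into the renewal-reward framework you correctly flag as the main obstacle; the paper's fixed-length Stage~1 makes the mean-payoff bookkeeping entirely elementary at the cost of a short probabilistic argument for parity. Both are valid, and both ultimately rest (implicitly in the paper, explicitly in your sketch) on the fact that a finite-memory strategy induces a finite Markov chain, so per-round expectation bounds that hold uniformly over starting states transfer to the long-run average.
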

\begin{proof}
  The proof of the result is similar as the proof of Lemma~\ref{lemm:key} and 
  the key difference is that the Stage~1 and Stage~2 strategies will be played 
  for a fixed number of rounds, depending on $\epsilon>0$, but will not vary
  across rounds.
  Fix $\epsilon>0$, and we show how to construct a finite-memory strategy 
  to achieve $2\cdot \epsilon$ approximation. As $\epsilon>0$ is arbitrary,
  the desired result will follow. 
  As in Lemma~\ref{lemm:key} we consider the two pure memoryless strategies: 
  one for the mean-payoff objective and one for reaching the minimum priority objective
  and combine them to produce the approximation of the value for mean-payoff parity
  objective.  Consider a pure memoryless optimal strategy $\straa_m$
  for the mean-payoff objective; and the strategy $\straa_S$ is a pure
  memoryless strategy for the stochastic shortest path to reach the
  states with the minimum priority (and the priority is even).
  As in Lemma~\ref{lemm:key} we observe that under the strategy $\straa_S$ 
  we obtain a Markov chain such that every closed recurrent set in the Markov chain contains
  states with the minimum priority, and hence from any state $s$ a
  state with the minimum priority (which is even) is reached in
  finite time with probability~1.  
  Let $n$ be the number of states of the end component, and let $\eta$ be the 
  minimum positive transition probability in the end component. 
  The strategy $\straa_S$ ensures that from all states $s$ there is a path 
  to the minimum even priority state in the graph of the Markov chain, 
  and the path is of length at most $n$. 
  Hence the strategy $\straa_S$ ensures that from all states $s$ a minimum 
  priority state is reached within $n$ steps with probability at least 
  $\eta^n$ (we will refer this as Property~1 later in the proof).
  As shown in Lemma~\ref{lemm:key} the mean-payoff value for all states 
  $s \in S$ is the same: for all $s,t\in S$ we have $\va(\Mean[r])(s)=\va(\Mean[r])(t)$, 
  and let us denote the uniform value by $v^*$. The strategy $\straa_m$ is a pure
  memoryless strategy and once it is fixed we obtain a Markov chain
  and the limit of the average frequency of the states exists and
  since $\straa_m$ is optimal it follows that for all states $s \in S$
  we have
  \[
  \lim_{n \to \infty} \frac{1}{n} \cdot \sum_{i=1}^n
  \Exp_s^{\straa_m}[r(\theta_i)] = v^*,
   \] 
  where $\theta_i$ is the random variable for the $i$-th state of a
  path.  Hence the strategy $\straa_m$ ensures that for any
  $\epsilon_1>0$, there exists $j(\epsilon_1) \in \nats$ such that if
  $\straa_m$ is played for any $\ell \geq j(\epsilon_1)$ steps then the
  expected average of the rewards for $\ell$ steps is within
  $\epsilon_1$ of the mean-payoff value of the MDP, i.e., for all $s \in
  S$, for all $\ell \geq j(\epsilon_1)$ we have
  \[
  \frac{1}{\ell}\cdot \sum_{i=0}^\ell \Exp_s^{\straa_m}[r(\theta_i)]
  \geq v^*-\epsilon_1.
   \]
  Let $\beta$ be the maximum absolute value of the rewards.  The
  finite-memory $2\cdot\epsilon$-optimal strategy for the mean-payoff parity 
  objective is played in rounds, but in contrast to Lemma~\ref{lemm:key} 
  in every round the same strategy is played.
  The strategy for a round is as follows:
  \begin{enumerate}
  \item \emph{Stage 1.} First play the strategy $\straa_S$ for $n$ steps.
  \item \emph{Stage 2.} Play the strategy $\straa_m$ for $\ell$ steps such that 
    $\ell \geq \max\set{j(\epsilon), \frac{1}{\epsilon} \cdot n\cdot \beta}$.  
    This ensures that the expected average of the rewards in a round
    is at least

    \[
    \begin{array}{rcl}
      \displaystyle
      \frac{\ell\cdot(v^*-\epsilon)}{n+ \ell} & = & 
      \displaystyle 
      \frac{(\ell+n)\cdot v^*-n\cdot v^* - \ell\cdot\epsilon} 
      {n + \ell} \\[2ex]
      & \geq & 
     \displaystyle 
      v^*-\frac{\ell\cdot\epsilon + n\cdot v^*}{\ell+n} \\[2ex]
     & \geq & 
      \displaystyle 
      v^*-\frac{\ell\cdot\epsilon + n\cdot \beta}{\ell+n} 
      \quad (\text{since } v^* \leq \beta)\\[2ex]
      & \geq & 
      \displaystyle 
      v^* -\frac{\ell \cdot \epsilon + n\cdot \beta}{\ell}\\[2ex]
      & = & 
      \displaystyle 
      v^* - \epsilon - \frac{n\cdot \beta}{\ell} \\[2ex]
      & \geq & 
      \displaystyle 
      v^*-\epsilon -\epsilon = v^* -2\cdot \epsilon.
    \end{array}
    \]
    Then the strategy proceeds to the next round.
  \end{enumerate}
  The above strategy is a finite-memory strategy as it needs to remember 
  the number $n$ for first stage and the number $\ell$ for second stage.
  The above strategy ensures that the value for the mean-payoff objective is
  at least $v^* -2\cdot \epsilon$. 
  To complete the proof that the strategy is a $2\cdot \epsilon$ optimal 
  strategy we need to show that the parity objective is satisfied with 
  probability~1.
  We call a round a \emph{success} is a minimum even priority state is visited.
  Hence we need to argue that with probability~1 there are infinitely many 
  success rounds.
  Every round is a success with probability at least $\alpha=\eta^n>0$ (as 
  by Property~1 the strategy $\straa_S$ ensures that a minimum priority 
  state is visited with probability at least $\alpha$ in $n$ steps). 
  For round $i$, the probability that there is no success round after round $i$
  is $\lim_{k \to \infty} \alpha^k=0$. 
  Since the countable union of measure zero events has measure zero, it follows 
  that for any round $i$, the probability that there is no success round after 
  round $i$ is zero. It follows that the probability that there are infinitely
  many success rounds is~1, i.e., the parity objective is satisfied with 
  probability~1.
  This completes the proof.
  \qed
\end{proof}
}

\highlight{ In the following we show that if a system can achieve the
  optimal value with a pure finite-state strategy, then it can achieve
  the optimal value also with a pure memoryless strategy.  }

\begin{lemma}\label{lemm:memoryless}
\highlight{
  Consider an MDP $\gamegraph =\tuple{S, s_0, E, \SA,\SR,\trans}$, a
  priority function $p$, and reward function $r$ such that (a)~$S$ is
  an end component and (b)~the minimum priority in~$S$ is even.
  If there exists an optimal pure finite-state strategy $\pi$, then
  there exists an optimal pure memoryless strategy~$\pi'$.}  
\end{lemma}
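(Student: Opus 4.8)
The plan is to exploit the structure that Lemma~\ref{lemm:key} already provides: since $S$ is an end component whose minimum priority is even, the optimal value is the uniform mean-payoff value $v^*:=\va_G(\Mean[r])(s)$, identical at every state. First I would observe that any optimal strategy $\pi$ with value $v^*\neq\bot$ must satisfy the parity objective with probability~$1$ (otherwise a positive-measure set of plays is mapped to $\bot$), so $\Exp_{s}^{\pi}[\Mean[r]]=v^*$ as well. Now fix the optimal \emph{pure finite-state} strategy $\pi$ and consider the finite Markov chain it induces on the product $S\times\mathtt{Mem}$ of MDP states and memory locations. Pick any recurrent class $R$ of this chain that is reachable from the initial configuration, and project it onto $S$ to obtain a set $U$. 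I would first check that $U$ is an end component of $G$: it is $\trans$-closed because $R$ is closed and the memory updates are deterministic, and it is strongly connected because every edge taken inside $R$ projects to an edge of $E$. Since parity holds with probability~$1$ and the states visited infinitely often along plays trapped in $R$ project exactly onto $U$, the minimum priority of $U$ is even; and since $\pi$ is optimal, the gain of $R$---and hence the mean-payoff attainable in $G\obciach U$---equals $v^*$ (every reachable recurrent class has gain at most $v^*$, and their reachability-weighted average is $v^*$).

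The core step is to turn the infinite-memory behaviour recorded in $R$ into a stationary one. I would project the stationary distribution of $R$ down to $S$, setting $\mu_U(s)=\sum_{m}\mu_R(s,m)$. Summing the balance equations of the product chain over the memory coordinate, and using that probabilistic states transition according to $\trans(s)$ independently of the memory while player-$\PA$ states contribute nonnegative flows $x_{s,s'}=\sum_m\mu_R(s,m)\,[\,\pi\text{ chooses }s'\text{ at }(s,m)\,]$, shows that $\mu_U$ together with the $x_{s,s'}$ forms a valid state-action frequency (occupation measure) on $G\obciach U$; its reward average is the gain of $R$, namely $v^*$, so $\mu_U$ is \emph{gain-optimal}. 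Because $R$ is recurrent, $\mu_U(s)>0$ for every $s\in U$; in particular, fixing a state $q^*\in U$ with $p(q^*)=\min p(U)$ (which is even), we have $q^*$ in the support of the gain-optimal occupation measure $\mu_U$.

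Next I would invoke the classical fact that the polytope of state-action frequencies of an MDP is the convex hull of the frequencies of pure memoryless strategies, each with a single recurrent class~\cite{FV96,Puterman}. Writing $\mu_U$ as such a convex combination, some extreme point in it has $q^*$ in its support; that extreme point corresponds to a pure memoryless strategy $\pi_D$ whose unique recurrent class $D\ni q^*$ is a \emph{good} end component (since $\min p(D)=p(q^*)=\min p(U)$ is even, as $q^*\in D\subseteq U$) with gain exactly $v^*$. Finally I would assemble the optimal memoryless strategy $\pi'$ on all of $S$: inside $D$ play $\pi_D$, and on the remaining (transient) states play a pure memoryless reachability strategy toward $D$, which exists because $S$ is a single end component, so $D$ is reached from everywhere with probability~$1$ (cf.\ Lemma~\ref{lemm:end-component1}). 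Under $\pi'$ the play reaches $D$ almost surely and then recurs on $D$, so the least priority seen infinitely often is $\min p(D)$, which is even, giving parity with probability~$1$, while the long-run average reward equals the gain of $D$, namely $v^*$ (the transient prefix does not affect it). Hence $\Exp_{s}^{\pi'}[\MP[p,r]]=v^*$ and $\pi'$ is an optimal pure memoryless strategy.

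I expect the main obstacle to be the middle step: verifying that the projection $\mu_U$ of the product-chain stationary distribution is genuinely a gain-optimal occupation measure of $G\obciach U$, and then extracting from it---via the extreme-point decomposition of the frequency polytope---a pure memoryless strategy whose \emph{single} recurrent class simultaneously attains gain $v^*$ and contains a minimum-priority (even) state. This is precisely where the finite-state hypothesis is used: without it, gain $v^*$ might only be approachable by visiting an even priority increasingly rarely (the infinite-memory construction of Lemma~\ref{lemm:key}), and no such stationary $D$ would exist. The remaining parts---the end-component closure of $U$, the averaging argument for the gain, and the memoryless reachability assembly---are routine.
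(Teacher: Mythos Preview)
Your argument is correct and takes a genuinely different route from the paper's own proof. The paper also starts by projecting a recurrent class of the product chain to an end component $\hat C|_G$ with even minimum priority, but then proceeds \emph{directly}: it fixes an optimal pure memoryless strategy for the mean-payoff objective on $G\obciach(\hat C|_G)$, looks at the resulting recurrent classes $C_1,\dots,C_k$, argues (by comparing with the infinite-memory construction of Lemma~\ref{lemm:key}) that all of them must have mean-payoff value $v^*$, discards those with odd minimum priority, and keeps one with even minimum priority as the target $D$. Your approach instead projects the \emph{stationary distribution} of the product recurrent class to obtain a gain-optimal occupation measure $\mu_U$ whose support is all of $U$, and then uses the extreme-point structure of the state--action frequency polytope~\cite{FV96,Puterman} to extract a deterministic single-recurrent-class witness $D$ that both attains gain $v^*$ and contains the even-minimum-priority state $q^*$. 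The occupation-measure route buys you a cleaner existence argument: because $\mu_U(q^*)>0$ you are guaranteed that some extreme summand contains $q^*$, whereas the paper's argument must separately justify why, among the recurrent classes $C_i$ of \emph{some} optimal memoryless mean-payoff strategy, at least one has even minimum priority (the transient set could in principle swallow $q^*$). Conversely, the paper's argument stays closer to strategies and avoids invoking the polyhedral characterisation of frequencies. Both then finish the same way, by gluing a memoryless reachability strategy toward $D$ onto the memoryless strategy inside $D$.
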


\begin{proof}
\highlight{
  Let $M$ be the Markov chain obtained by fixing the strategy in
  $\gamegraph$ to $\pi$, i.e., $M$ is the synchronous product of
  $\gamegraph$ and a finite-state system describing $\pi$.
  Since the mean-payoff parity objective is prefix-independent and~$S$
  is an end-component (i.e., all states can reach each other with
  probability~$1$), all recurrence classes in~$M$ have the same
  mean-payoff parity value.
  Therefore we can construct a finite-state strategy~$\hat{\pi}$ such
  that the Markov chain~$\hat{M}$ obtained by fixing the strategy in
  $\gamegraph$ to $\hat{\pi}$ has a single recurrence class.  Let
  $\hat{C}$ be the single recurrence class of~$\hat{M}$ and let
  $\hat{C}|_{\gamegraph}$ be the set of states in~$\gamegraph$ that
  appear in~$\hat{C}$. We know that $\min(p(\hat{C}|_{\gamegraph}))$
  is even.  Let $C_1,\dots, C_k$ be the \emph{component recurrence
    classes} that arise if we fix an optimal pure memoryless strategy
  for the mean-payoff objective in~$\gamegraph$ restricted
  to~$\hat{C}|_{\gamegraph}$.
  Since $\hat{\pi}$ is an optimal strategy, $\hat{C}$ and its
  component recurrence classes $C_1,\dots, C_k$ have the same
  mean-payoff value.  Otherwise, assume there exists some $C_i$ that
  has a higher value, then an infinite-state strategy that alternates
  between playing a strategy that ensures~$C_i$ and a strategy to
  reach the minimal priority (cf.~proof of Lemma~\ref{lemm:key}) would
  achieve a higher mean-payoff parity value, which contradicts the
  assumption that~$\hat{\pi}$ is an optimal strategy. Similarly, if
  some~$C_i$ has a lower value, then removing~$C_i$ would again result
  in a better strategy.
  If there is a recurrence class $C_i$ such that $\min(p(C_i))$ is
  odd, then we can ignore $C_i$ in $\hat{C}$ without changing the
  value.
  Finally, assume there are two component recurrent classes $C_1$ and
  $C_2$ such that $\min(p(C_1))$ and $\min(p(C_2))$ is even, then we
  can ignore one of them without changing the payoff value.
  From these properties, it follows that if there exists an optimal
   finite-state strategy $\pi$, then  there exists a recurrence class
  $C_i$ s.t.~the minimal priority is even and the mean-payoff value is
  the same as the mean-payoff value of $\pi$.  The desired pure
  memoryless strategy $\pi'$ enforces the recurrence class $C_i$ by
  playing a strategy to stay within $C_i$ for states  in $C_i$ and for
  all states outside of $C_i$ it plays a pure memoryless almost-sure
  winning strategy to reach~$C_i$.
}
\end{proof}

\highlight{\subsection{Algorithm based on linear programming}}

\paragraph{\bf Computing best end-component values}  
We first compute a set $S^*$ such that every end component $U$ with
$\min(p(U))$ is even is a subset of $S^*$.  We also compute a function
$f^*:S^* \to \reals^+$ that assigns to every state $s \in S^*$ the
value for the mean-payoff parity objective that can be obtained by
visiting only states of an end component that contains $s$.  The
computation of $S^*$ and $f^*$ is as follows:
\begin{enumerate}
\item $S^*_0$ is the set of maximal end-components with priority~0 and
  for a state $s \in S^*_0$ the function $f^*$ assigns the mean-payoff
  value when the sub-game is restricted to $S^*_0$ (by
  Lemma~\ref{lemm:key} we know that if we restrict the game to the
  end-components, then the mean-payoff values and mean-payoff parity
  values coincide);
\item for $i\geq 0$, let $S^*_{2i}$ be the set of maximal end
  components with states with priority $2i$ or more and that contains
  at least one state with priority~$2i$, and $f^*$ assigns the
  mean-payoff value of the MDP restricted to the set of end components
  $S^*_{2i}$.
\end{enumerate}
The set $S^* = \bigcup_{i=0}^{\lfloor d/2 \rfloor} S^*_{2i}$.  This
procedure gives the values under the end-component consideration.
\highlight{In the following, we show how to check if an end-component
  has a pure memoryless strategy that achieves the optimal value.}

\highlight{
\paragraph{\bf Checking end-component for memoryless strategy}
\label{sec:memoryless}

Let $U\in S^*$ be a maximal end-component with a minimal even
priority, as computed in the previous section.  Without loss of
generality we assume that the MDP is bipartite, i.e., player-1 states
and probabilistic states strictly alternate along every path.  Let
$E_1= E \cap (S_1 \times S_P)$ be the set of player-1 edges, i.e., the
set of edges starting from a player-1 state.
The mean-payoff value of an end-component can be computed using the
following linear program for MDPs with unichain strategies
(cf.~\cite{Puterman,deAlfaro97}):

\newcommand\x[2]{{x_{(#1,#2)}}}

\begin{eqnarray}
  \text{maximize }&&
  \sum_{(s,t)\in E_1} \;\x{s}{t} \cdot (r(s) +
  r(t))\label{eqn:objective}\\[4mm]
  \text{subject to }
  &&\sum_{(s,t)\in E_1} \; \x{s}{t} = 1\\[4mm]
  &&\forall_{s\in S_1} \sum_{t\in S_P, (s,t)\in E}\; \x{s}{t}= 
  \sum_{(s',t')\in E_1} \; \x{s'}{t'} \cdot \trans(t',s)
  \label{eqn:constraints}
\end{eqnarray}

The program has one variable $\x{s}{t}$ for every outgoing edge of a
player-1 state. Intuitively, $\x{s}{t}$ represents the frequency of
being in state~$s$ and choosing the edge to state~$t$.  Note that
all states $s,t$ such that $\x{s}{t} >0$ belong to a recurrence class.
In order to check if there exists an optimal pure memoryless strategy
in~$U$, we call a modified version of the linear program above for
every even priority~$d$. In particular, we add the following
additional constraints:
\begin{equation}
\label{eq:no_smaller_priority}
   \forall_{s\in S_1} \forall_{t\in S_P: (s,t)\in E} \;
    \x{s}{t} = 0 \mbox{ if } p(s)<d \text{ or }p(t) < d
\end{equation}
It requires that in the resulting recurrence class no priority small
than~$d$ is visited. To ensure that the resulting recurrence class
includes at least on state with priority~$d$, we add the following
term to the objective function (Eqn.~\ref{eqn:objective}).
\begin{equation}
\label{eq:minimum_even}
  \sum_{(s,t)\in E_1 \text{ s.t.\ } p(s)=d \text{ or } p(t)=d} \;
  \x{s}{t}
\end{equation}
Finally, let $v$ be the mean-payoff value for~$U$ obtained by solving
the linear program with Eqn.~\ref{eqn:objective}
to~\ref{eqn:constraints}. If there exists an even priority~$d$ such
that the modified linear program (Eqn.~\ref{eqn:objective}
to~\ref{eq:minimum_even}) has a value strictly greater than $v$, then
there exists a pure memoryless strategy in~$S$ that achieves the
optimal value.  If the value of the linear program is strictly greater
than~$v$, then there exists a witness priority~$d$ and a corresponding
edge $(s,t)\in E_1$ such that $\x{s}{t}$ in Eqn.~\ref{eq:minimum_even}
has a positive value.}

\highlight{In order to compute the maximal reachability expectation we
  present the following reduction.}

\paragraph{\bf Transformation to MDPs with max objective}
Given an MDP $\gamegraph =\tuple{S, s_0, E, \SA,\SR,\trans}$ with a
positive reward function $r:S\to \reals^+$ and a priority function
$p:S\to\set{0,\dots,d}$, and let $S^*$ and $f^*$ be the output of the
above procedure.  We construct an MDP $\ov{\gamegraph} =\tuple{\ov{S},
  s_0, \ov{E}, \ovSA,\SR,\trans}$ with a reward function $\ov{r}$ as
follows: $\ov{S} =S \cup \wh{S}^*$ (i.e., the set of states consists
of the state space $S$ and a copy $\wh{S}^*$ of $S^*$), {$\ov{E}= E
  \cup \set{(s,\wh{s}) \mid s \in S^* \cap \SA \mbox{and $\wh{s}$ is
      the copy of $s$ in } \wh{S^*}} \cup \set{(\wh{s},\wh{s}) \mid
    \wh{s} \in \wh{S}^*}$} (i.e., along with edges $E$, for all
player~1 states $s$ in $S^*$ there is an edge to its copy $\wh{s}$ in
$\wh{S}^*$, and all states in $\wh{S}^*$ are absorbing states),
$\ovSA=\SA \cup \wh{S}^*$, $\ov{r}(s)=0$ for all $s \in S$ and
{$\ov{r}(\wh{s})=f^*(s)$, where $\wh{s}$ is the copy of $s$.}  We
refer to this construction as \emph{max} conversion.  The relationship
between $\va_G(\MP[p,r])$ and $\va_{\ov{G}}(\Max{\ov{r}})$ can be
established as follows.
\begin{enumerate}
\item Consider a strategy $\straa$ in $G$. 
  If an end component $U$ is visited infinitely often, and 
  $\min(p(U))$ is odd, then the payoff is $\bot$, and 
  if $\min(p(U))$ is even, then the maximal payoff achievable for the
  mean-payoff parity objective is upper bounded by the payoff of 
  the mean-payoff objective (which is assigned by $f^*$).
  It follows that for all $s \in S$ we have 
  $\va_G(\MP[p,r])(s) \leq\va_{\ov{G}}(\Max{\ov{r}})(s).$
  
\item Let $\ov{\straa}$ be a pure memoryless optimal strategy for the
  objective $\Max{\ov{r}}$ in $\ov{G}$.  We fix a strategy $\straa$ in
  $G$ as follows: if at a state $s \in S^*$ the strategy $\ov{\straa}$
  chooses the edge $(s,\wh{s})$, then in $G$ on reaching $s$, the
  strategy $\straa$ plays according to the strategy of an winning end
  component that ensures the mean-payoff value (as shown in
  Lemma~\ref{lemm:key}), otherwise $\straa$ follows $\ov{\straa}$.  It
  follows that for all $s \in S$ we have
  $\va_G(\MP[p,r])(s) \geq\va_{\ov{G}}(\Max{\ov{r}})(s).$
\end{enumerate} 
It follows that for all $s \in S$ we have 
$\va_G(\MP[p,r])(s) =\va_{\ov{G}}(\Max{\ov{r}})(s).$
{In order to solve $\ov{G}$ with the objective
  $\Max{\ov{r}}$, we set up the following linear program and solve it
  with a standard LP solver (e.g., \cite{glpk}).} 

\paragraph{\bf Linear programming for the max objective in $\ov{G}$} 
 The following linear program characterizes the value
function $\va_{\ov{G}}(\Max{\ov{r}})$.  Observe that we have already
restricted ourselves to the almost-sure winning states
$W_G(\Parity[p])$, and below we assume $W_G(\Parity[p])=S$.  For all
$s \in \ov{S}$ we have a variable $x_s$ and the objective function is
$\min \sum_{s \in \ov{S}} x_s$.  The set of linear constraints are as
follows: %
\ifshort
$(1)~\forall s\in \ov{S}: x_s \geq 0$, %
$(2)~\forall s \in \wh{S}^*: x_s = \ov{r}(s)$, %
$(3)~\forall s \in \ov{S}_1, (s,t) \in \ov{E}: x_s \geq x_t$, and %
$(4)~\forall s\in \ovSR: x_s = \sum_{t \in \ov{S}} \ov{\trans}(s)(t)
\cdot x_t.$ %
\else
\[
\begin{array}{rcll}
  x_s &\geq &  0 & \quad \forall s\in \ov{S}; \\
  x_s & = & \ov{r}(s) & \quad \forall s \in \wh{S}^*; \\
  x_s & \geq & x_t & \quad \forall s \in \ov{S}_1, (s,t) \in \ov{E}; \\
  x_s & = & \sum_{t \in \ov{S}} \ov{\trans}(s)(t) \cdot x_t & 
  \quad \forall s\in \ovSR.
\end{array} 
\]
\fi The correctness proof of the above linear program to characterize
the value function $\va_{\ov{G}}(\Max{\ov{r}})$ follows by extending
the result for reachability objectives~\cite{FV96}.  The key property
that can be used to prove the correctness of the above claim is as
follows: if a pure memoryless optimal strategy is fixed, then from all
states in $S$, the set $\wh{S}^*$ of absorbing states is reached with
probability~1.  The above property can be proved as follows: since $r$
is a positive reward function, it follows that for all $s \in S$ we
have $\va_G(\MP[p,r])(s)>0$.  Moreover, for all states $s \in S$ we
have $\va_{\ov{G}}(\Max{\ov{r}})(s) =\va_G(\MP[p,r])(s)>0$.  Observe
that for all $s \in S$ we have $\ov{r}(s)=0$.  Hence if we fix a pure
memoryless optimal strategy $\straa$ in $\ov{G}$, then in the Markov
chain $\ov{G}_{\straa}$ there is no closed recurrent set $C$ such that
$C \subseteq S$.  It follows that for all states $s\in S$, in the
Markov $\ov{G}_{\straa}$, the set $\wh{S}^*$ is reached with
probability~1.  Using the above fact and the correctness of
linear-programming for reachability objectives, the correctness proof
of the above linear-program for the objective $\Max{\ov{r}}$ in
$\ov{G}$ can be obtained.  This shows that the value function
$\va_{G}(\MP[p,r])$ for MDPs with reward function $r$ can be computed
in polynomial time. 
\highlight{We can search for a pure memoryless strategy that achieves
  the optimal value by slightly modify the presented procedure. First,
  we check for each end-component if a pure memoryless strategy with
  optimal value exists.  Then, in the transformation to MDP with max
  objective, we create copy states only for states in end-components
  that have optimal pure memoryless strategies. In all states, for
  which the values obtain from the two different transformation to MDP
  with max objective coincide, a pure memoryless strategy that
  achieves the optimal value exists.}  This given us the following lemma.

\begin{lemma}\label{lm:meanpayoff-parity}
  Given a MDP with a mean-payoff parity objective, the value function
  for the mean-payoff parity objective can be computed in polynomial
  time.  \highlight{We can decide in polynomial time if there exists
    a pure memoryless (or finite-state) strategy that achieves the
    optimal value.}
\end{lemma}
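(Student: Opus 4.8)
The plan is to treat this lemma as the assembly of the machinery developed above, establishing its two claims separately: polynomial-time computation of the value function $\va_G(\MP[p,r])$, and polynomial-time decidability of the existence of an optimal pure memoryless (equivalently, pure finite-state) strategy. For the first claim the key idea is to decompose the problem into (i) computing, inside every relevant end component, the best mean-payoff parity value, and (ii) computing the maximal expectation with which such an end component can be reached.

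For the value computation I would first restrict attention to the almost-sure winning region $W_G(\Parity[p])$, which is computable in time $O(d\cdot n^{3/2})$ as recalled in the preliminaries; outside this region the value is $\bot$, so assume $W_G(\Parity[p])=S$. Because the max-objective reduction is stated for a positive reward function, I would shift the rewards by a constant (adding $\beta+1$, where $\beta$ bounds $|r|$) so that $r:S\to\reals^+$; this shifts every mean-payoff and mean-payoff parity value by the same constant, which I subtract at the end. Next I run the best-end-component procedure to compute $S^*$ and $f^*$: iterating over even priorities $2i$, I take the maximal end components whose states all have priority at least $2i$ and that contain a priority-$2i$ state, and let $f^*$ be the mean-payoff value of the MDP restricted to those end components. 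By Lemma~\ref{lemm:key} this mean-payoff value equals the mean-payoff parity value achievable inside such an end component, and each step uses only end-component decomposition and mean-payoff value computation, both polynomial. Finally I apply the max conversion to obtain $\ov{G}$ with reward $\ov{r}$, use the already-established two-sided inequality to get $\va_G(\MP[p,r])(s)=\va_{\ov{G}}(\Max{\ov{r}})(s)$ for all $s$, and solve the linear program for $\Max{\ov{r}}$ with a polynomial-time LP solver. This yields the value function in polynomial time.

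For the second claim I would first invoke Lemma~\ref{lemm:memoryless} to collapse the two questions into one: inside an even-min-priority end component an optimal pure finite-state strategy exists only if an optimal pure memoryless one exists, so deciding the memoryless case also decides the finite-state case. The decision procedure then mirrors the value computation but with a restricted end-component analysis. For each maximal end component $U$ with even minimum priority I compute its mean-payoff value $v$ from the unichain program in Eqns.~\ref{eqn:objective}--\ref{eqn:constraints}, and for each even priority $d$ I run the modified program obtained by adding the zero-frequency constraints \ref{eq:no_smaller_priority} and the priority-$d$ objective term \ref{eq:minimum_even}; a strictly larger value certifies a single recurrence class that avoids all priorities below $d$, visits a (even) priority-$d$ state, and attains mean-payoff $v$, i.e., a pure memoryless optimizer inside $U$. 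I then build a restricted $f^{**}$ that assigns values only to end components passing this test, perform a second max conversion and LP to obtain the restricted value function $\va'$, and conclude that a pure memoryless optimal strategy exists from a state $s$ exactly when $\va'(s)=\va_G(\MP[p,r])(s)$. Since there are at most $d/2$ even priorities and every subroutine is polynomial, the whole procedure is polynomial.

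The main obstacle I anticipate is the correctness of the max-objective linear program, specifically the combination of $\va_G(\MP[p,r])=\va_{\ov{G}}(\Max{\ov{r}})$ with the LP characterization of the max value: the LP argument relies on a pure memoryless optimal strategy in $\ov{G}$ reaching the absorbing copy states $\wh{S}^*$ with probability $1$, and this needs the reward positivity that the shifting preprocessing guarantees, so that no closed recurrent set can remain inside $S$ where $\ov{r}=0$. A secondary delicate point is justifying the frequency-variable test: I must argue that the modified program with constraints \ref{eq:no_smaller_priority} and objective term \ref{eq:minimum_even} exactly characterizes the existence of a unichain pure memoryless strategy whose single recurrence class has even minimum priority and mean-payoff value $v$, so that passing the test is equivalent to the existence of a memoryless optimizer inside that end component.
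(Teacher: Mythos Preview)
Your proposal is correct and follows essentially the same route as the paper: the same end-component decomposition via $S^*$ and $f^*$ (Lemma~\ref{lemm:key}), the same max-conversion to $\ov{G}$ with the LP for $\Max{\ov{r}}$, and the same memoryless test via the modified unichain LP (Eqns.~\ref{eqn:objective}--\ref{eq:minimum_even}) followed by a restricted max-conversion and comparison of values. The explicit reward-shifting step you add to guarantee positivity is a sound way to discharge the hypothesis the paper simply assumes when it writes ``positive reward function $r:S\to\reals^+$'' in the max-conversion paragraph.
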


Note that, \highlight{in general,} the optimal strategies constructed
for mean-payoff parity requires memory, but the memory requirement is
captured by a counter (which can be represented by a state machine
with state space $\nats$).  The optimal strategy as described in
Lemma~\ref{lemm:key} plays two memoryless strategies, and each
strategy is played a number of steps which can be stored in a counter.
\highlight{Using Lemma~\ref{lemm:key-approx}, we can fix the size the
  counter for any $\epsilon>0$ and obtain a finite-state strategy that
  is $\epsilon$-optimal.
  Lemma~\ref{lemm:memoryless} and the procedure above
  allows us to check in polynomial time if there exists a pure
  memoryless strategy that achieves the optimal value.  This result is
  quite surprising because the related problem of computing the
  optimal pure memoryless strategy, i.e., the strategy that is optimal
  with respect to all pure memoryless strategy is NP-complete; the
  upper bound follows from Theorem~\ref{thm:mc} and the fact that
  emptiness of parity automata can be checked in polynomial
  time~\cite{King01}; the lower bound follows from a reduction of the
  directed subgraph homeomorphism problem~\cite{FHW80}.  }

Lemma~\ref{lm:mdp} and
Lemma~\ref{lm:meanpayoff-parity} yield the following theorem.

\begin{theorem}
  Given a Parity specification $A$, a Mean-payoff specification $B$,
  and a labeled Markov chain $(G,\lambda)$ defining a probability
  measure $\Prb$ on $(\A_I^\omega,{\cal F})$, we can construct a state
  machine $M$ (if one exists) in polynomial time that satisfies $L_A$
  under $\mu$ and optimizes $L_B$ under $\mu$.  \highlight{ We can
    decide in polynomial time if~$M$ can be implemented by a
    finite-state machine.  If~$M$ requires infinite memory, then for
    all $\epsilon>0$, we can construct a finite-state machine~$M'$
    that satisfies $L_A$ under $\mu$ and optimizes $L_B$ under $\mu$
    within~$\epsilon$.  }
\end{theorem}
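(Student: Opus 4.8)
The plan is to assemble the theorem from Lemma~\ref{lm:mdp} and Lemma~\ref{lm:meanpayoff-parity} together with the finite-state refinements established in this section. First I would apply Lemma~\ref{lm:mdp} to reduce synthesis to strategy synthesis: from $A$, $B$, and $(G,\lambda)$ it builds a labeled MDP $\gamegraph'$ with reward $r'$ and priority $p'$ so that any pure optimal strategy for $\MP[p',r']$ whose value differs from $\bot$ yields a state machine $M$ that satisfies $L_A$ under $\mu$ and optimizes $L_B$ under $\mu$. By Lemma~\ref{lm:meanpayoff-parity} the value $\va_{\gamegraph'}(\MP[p',r'])(s'_0)$ is computable in polynomial time; a suitable $M$ exists exactly when this value is not $\bot$, i.e.\ when $s'_0 \in W_{\gamegraph'}(\Parity[p'])$, which is itself a polynomial-time check. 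When the value is finite, an optimal strategy is obtained by combining the pure memoryless maximal-reachability strategy of the max-conversion with the per-end-component strategy of Lemma~\ref{lemm:key}, and reading $M$ off its structure establishes the first sentence.

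For deciding finite-state implementability, the crucial ingredient is Lemma~\ref{lemm:memoryless}: inside an end component with even minimum priority an optimal pure finite-state strategy exists only if an optimal pure memoryless strategy does. Consequently $M$ is realizable by a finite-state machine iff $\gamegraph'$ admits a pure memoryless optimal strategy for $\MP[p',r']$, and by Lemma~\ref{lm:meanpayoff-parity}, using the end-component linear program (Equations~\ref{eqn:objective}--\ref{eq:minimum_even}) and the comparison of the two max-conversions, this is decidable in polynomial time.

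For the $\epsilon$-optimal finite-state construction when $M$ requires infinite memory, I would replace the infinite-memory in-component strategy of Lemma~\ref{lemm:key} by the bounded-counter strategy $\straa_\epsilon$ of Lemma~\ref{lemm:key-approx}, which guarantees $\Exp_s^{\straa_\epsilon}[\MP[p,r]] \geq \va_G(\Mean[r])(s) - \epsilon$ inside each end component while still satisfying the parity condition almost surely. Composing $\straa_\epsilon$ with the pure memoryless maximal-reachability strategy yields a global finite-state strategy whose mean-payoff parity value is within $\epsilon$ of the optimum and whose value is not $\bot$; feeding it back through Lemma~\ref{lm:mdp} produces the finite-state machine $M'$ that satisfies $L_A$ under $\mu$ and optimizes $L_B$ under $\mu$ within $\epsilon$.

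The main obstacle I anticipate is the composition of the reachability phase with the in-component phase: one must verify that switching, upon entry to the chosen optimal end component, from the memoryless reachability strategy to the (finite-memory) in-component strategy preserves both almost-sure satisfaction of the parity objective and the global within-$\epsilon$ mean-payoff guarantee, not merely the guarantee restricted to the end component. The prefix-independence of $\MP[p,r]$ together with Lemma~\ref{lemm:end-component} (with probability~1 the set of states visited infinitely often is an end component) are the facts that make this composition sound and let the end-component analysis lift to $\gamegraph'$ as a whole.
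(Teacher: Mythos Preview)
Your proposal is correct and follows the same route as the paper: the paper's own proof is the single sentence ``Lemma~\ref{lm:mdp} and Lemma~\ref{lm:meanpayoff-parity} yield the following theorem,'' and the surrounding discussion already points to Lemma~\ref{lemm:key}, Lemma~\ref{lemm:key-approx}, and Lemma~\ref{lemm:memoryless} for exactly the roles you assign them. Your write-up is simply a more explicit unpacking of that one-line reference, including the composition concern the paper leaves implicit.
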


\section{Experimental Results}
\label{sec:experiments}
In this section we illustrate which types of systems, we can construct
using qualitative and quantitative specifications under probabilistic
environment assumptions.  
\highlight{We have implemented the approach as part of \quasy, our quantitative
synthesis tool~\cite{quasy}.  Our tool takes qualitative and
quantitative specifications and automatically constructs a system that
satisfies the qualitative specification and optimizes the quantitative
specification, if such a system exists. The user can choose between a
system that satisfies and optimizes the specifications (a) under all
possible environment behaviors or (b) under the most-likely
environment behaviors given as a probability distribution on the
possible input sequences.

We are interested in the latter functionality, i.e., in systems that
are optimal for the average-case behaviors of the environment.  In
this case, a specification consists of (i) a safety or a parity
automaton $A$, (ii) a mean-payoff automaton~$B$, and an environment
assumption~$\mu$, given as a set of probability distributions $d_s$
over input letters for each state $s$ of~$B$.
Our implemenation first builds the product of~$A$ and~$B$. Then, it construct the
corresponding MDP $G$. If~$A$ is a safety specification, our implementation
computes an optimal pure memoryless strategy using policy iteration
for multi-chain MDPs~\cite{FV96}.  Finally, if the value of the
strategy is different from~$\bot$, then it converts the strategy to a
finite-state machine $M$ which satisfies $L_A$ (under $\mu$) and is
optimal for $B$ under $\mu$. 
In the case of parity specifications, we implemented the algorithm
described in Section~\ref{sec:meanpayoffparity}. Then, our implementation produces
two mealy machines $M_1$ and $M_2$ as output: (i)~$M_1$ is optimal wrt
the mean-payoff objective and (ii)~$M_2$ almost-surely satisfies the
parity objective. The actual system corresponds to a combination of
the two mealy machines based on inputs from the environment switching
over from one mealy machine to another based on a counter as explained
in Section~\ref{sec:meanpayoffparity}.
More precisely, if we use the strategy used in the proof of
Lemma~\ref{lemm:key}, we obtain an optimal but infinite-state system,
because the size of the counter cannot be bounded.  If we aim for a
finite-state system, we can use the strategy suggested in proof of
Lemma~\ref{lemm:key-approx} leading to a finite-state system with an
$\epsilon$-optimal value.
Furthermore, Lemma~\ref{lemm:memoryless} and the corresponding linear
program in Section~\ref{sec:memoryless} allows us to check if there
exists an optimal pure finite-state strategy. In this case, we can
return a single mealy machine.}

\subsection{Priority-driven Controller.}  In our first experiment, we
took as the quantitative specification~$B$ the product of the
specifications~$A_1$ and~$A_2$ from Example~\ref{ex:controller}
(\figref\ref{fig:grant_fast}), where we sum the weights on the edges.
The qualitative specification is a safety automaton $A$ ensuring
mutually exclusive grants. We assumed the constant probabilities
$P(\set{r_1=1})=0.4$ and $P(\set{r_2=1})=0.3$ for the events $r_1=1$
and $r_2=1$, respectively.  The optimal machine constructed by the
tool is shown in \figref\ref{fig:g2M3}. 
Note that its behavior does not depend on the state, i.e., State~$q_0$
and~$q_1$ are simulation equivalent and can be merged. Since our tool
does not minimize state machines yet, we obtain a system with two
states.
This system behaves like a
priority-driven scheduler. It always grants the resource to the client
that is more likely to send requests, if she is requesting it.
Otherwise, the resource is granted to the other client.  Intuitively,
this is optimal because Client~1 is more likely to send requests and
so missing a request from Client~2 is better than missing a request
from Client~1.  

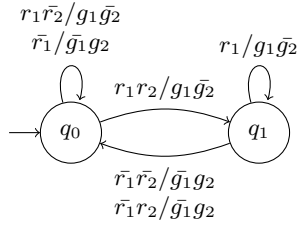
\begin{figure}[t]
\begin{center}
\begin{tikzpicture}[node distance=2.5cm,auto,bend angle=20]
  {\fsize
    \node[state, initial, initial text=] (n2) {$q_0$};
    \node[state, right of=n2] (n3) {$q_1$};
   
    \path[->] (n2)   
    edge [loop above] node {
      $\begin{array}{c}
        r_1 \bar{r_2}/g_1 \bar{g_2}\\
        \bar{r_1}/\bar{g_1} g_2
      \end{array}$} (n2)
    edge [bend left]    node {$r_1 r_2/g_1 \bar{g_2}$}  (n3);
   
    \path[->] (n3)
    edge [loop above] node {$r_1/g_1 \bar{g_2}$} (n3)
    edge [bend left] node {
      $\begin{array}{c}
        \bar{r_1} \bar{r_2}/\bar{g_1} g_2\\
        \bar{r_1} r_2/\bar{g_1} g_2
      \end{array}$
    } (n2);
  }
\end{tikzpicture}

\end{center}
  \caption{Optimal Mealy machine for the $2$-client specification
    without response constraints and the safety automaton $G_2$}
  \label{fig:g2M3}
\end{figure}

\begin{table}[t]
  \caption{Results for 2 to 7 clients without response
    constraints\label{tab:cavimproved}}
\begin{center}
  {\begin{tabular}{cccccc}
      \toprule
     Clients & States in $A\times B$ & States in $G$ & States in $M$
      & Value of $M$ & Time in s\\
      \midrule
      2 & 4 & 13 & 2 & 1.854 & 0.50 \\
      3 & 8 & 35 & 4 & 2.368 & 0.81 \\
      4 & 16 & 97 & 8 & 2.520 & 1.64 \\
      5 & 32 & 275 & 16 & 2.534 & 3.43 \\
      6 & 64 & 793 & 32 & 2.534 & 15.89 \\
      7 & 128 & 2315 & 64 & 2.534 & 34.28\\
     \bottomrule
   \end{tabular}}
\end{center}
\end{table}

\begin{table}[t]
  \caption{Results for 2 to 4 clients with response
    constraints\label{tab:gen2}}
\begin{center}
  {\begin{tabular}{ccccc}
      \toprule
     Clients & States in $A\times B$ & States in $G$ & States in
      $M$ & Value of $M$\\
      \midrule
      2 &	3 &	11 &	3 &	1.850\\
      3 &	34 &	156 &	16 &	2.329\\
      4 &	125 &	557 &	125 &	2.366\\
      \bottomrule
    \end{tabular}
  }
\end{center}
\end{table}

\subsection{Fair Controller.}  In the second experiment, we
added response constraints to the safety specification.
The constraints are given as safety automata that require that every
request is granted within two steps.  We added one automaton $C_i$ for
each client~$i$ and the final qualitative specification was $A\times
C_1\times C_2$.
The optimal machine the tool constructs is System~$M_2$ of
Example~\ref{ex:controller} (\figref\ref{fig:system2}).  System~$M_2$
follows the request sent, if only a single request is send.  If both
clients request simultaneously, it alternates between $g_1$ and $g_2$.
If none of the clients is requesting it grants $g_1$.  Recall that
system $M_1$ and $M_2$ from Example~\ref{ex:controller} exhibit the
same worst-case behavior, so a synthesis approach based on optimizing
the worst-case behavior would not be able to construct~$M_2$.

\subsection{General Controllers.}  We reran both experiments for
several clients.  Again, the quantitative specification was the
product of $A_i$'s.  We used a skewed probability distribution with
$P(\set{r_n =1})= 0.3$ and $P(\set{r_i=1}) = P(\set{r_{i+1}=1}) + 0.1$
for $1\le i\le6$
and the qualitative specification required mutual exclusion.
Table~\ref{tab:cavimproved} shows in the first three columns the
number of clients, the size of the specification ($A\times B$), and
the size of the corresponding MDP $(G)$.  Column~4 and~5 show the size
and the value of the resulting machine $(M)$, respectively. The last
column shows the time needed to construct the system. The runs took
between half a second and half a minute. The systems generated as a
result of this experiment have an intrinsic priority to granting
requests in order of probabilities from largest to smallest.
Table~\ref{tab:gen2} shows the results when adding response
constraints that require that every request has to be granted within
the next $n$ steps, where $n$ is the number of clients. This
experiment leads to quite intelligent systems which prioritize with
the most probable input request but slowly the priority shifts to the
next request variable cyclically resulting into servicing any request
in $n$ steps when there are $n$ clients.  Note that these systems are
(as expected) quite a bit larger than the corresponding
priority-driven controllers.

\section{Conclusions and Future Work}
\label{sec:conclusion}

In this paper we showed how to measure and synthesize systems under
probabilistic environment assumptions wrt qualitative and quantitative
specifications.  We considered the satisfaction of the qualitative
specification with probability~1 ($M \models_{\Prb} \varphi$).
Alternatively, we could have considered the satisfaction of the
qualitative specification with certainty ($M \models \varphi$).  For
safety specification the two notions coincide, however, they are
different for parity specification.  The notion of satisfaction of the
parity specification with certainty and optimizing the mean-payoff
specification can be obtained similar to the solution of mean-payoff
parity games~\cite{CHJ05} by replacing the solution of mean-payoff
games by solution of MDPs with mean-payoff objectives.  However, since
solving MDPs with parity specification for certainty is equivalent to
solving two-player parity games, and no polynomial time algorithm is
known for parity games, the algorithmic solution for the satisfaction
of the qualitative specification with certainty is computationally
expensive as compared to the polynomial time algorithm for MDPs with
mean-payoff parity objectives.  Moreover, under probabilistic
assumption satisfaction with probability~1 is the natural notion.

\highlight{
We have implemented our algorithm in the tool \quasy, a quantitative
synthesis tool for constructing worst-case and average-case optimal
systems with respect to a qualitative and a quantitative
specification. We can check if an optimal finite-state system
exists and then either constructs an optimal or an $\epsilon$-optimal
system depending on the outcome of the check.

In our future work, we will explore different directions to improve
the performance of \quasy.  In particular, a recent paper by
Wimmer et al.~\cite{wimmer10} presents an efficient technique for solving MDP with
mean-payoff objectives based on combining symbolic and explicit
computation.  We will investigate if symbolic and explicit
computations can be combined for MDPs with mean-payoff parity
objectives as well.}

\bibliographystyle{plain}
\bibliography{main,tool}

\begin{thebibliography}{10}

\bibitem{DeAlfa97}
Luca~de Alfaro.
\newblock Temporal logics for the specification of performance and reliability.
\newblock In {\em STACS '97}, pages 165--176, London, UK, 1997.
  Springer-Verlag.

\bibitem{Alur09}
R.~Alur, A.~Degorre, O.~Maler, and G.~Weiss.
\newblock On omega-languages defined by mean-payoff conditions.
\newblock In {\em FOSSACS}, Lecture Notes in Computer Science, pages 333--347.
  Springer, 2009.

\bibitem{Baier}
C.~Baier, M.~Gr{\"o}{\ss}er, M.~Leucker, B.~Bollig, and F.~Ciesinski.
\newblock Controller synthesis for probabilistic systems.
\newblock In {\em IFIP TCS}, pages 493--506, 2004.

\bibitem{BaierK08}
C.~Baier and J.-P. Katoen.
\newblock {\em Principles of Model Checking (Representation and Mind Series)}.
\newblock The MIT Press, 2008.

\bibitem{BdA95}
A.~Bianco and L.~de~Alfaro.
\newblock Model checking of probabilistic and nondeterministic systems.
\newblock In {\em FSTTCS 95}, pages 499--513. Springer-Verlag, 1995.

\bibitem{BloemCHJ09}
R.~Bloem, K.~Chatterjee, T.A. Henzinger, and B.~Jobstmann.
\newblock Better quality in synthesis through quantitative objectives.
\newblock In {\em CAV}. Springer, 2009.

\bibitem{BloemGHJ09}
R.~Bloem, K.~Greimel, T.A. Henzinger, and B.~Jobstmann.
\newblock Synthesizing robust systems.
\newblock In {\em FMCAD'09}, 2009.

\bibitem{CCHKM05}
A.~Chakrabarti, K.~Chatterjee, T.A. Henzinger, O.~Kupferman, and R.~Majumdar.
\newblock Verifying quantitative properties using bound functions.
\newblock In {\em CHARME}. Springer, 2005.

\bibitem{CAHS03}
A.~Chakrabarti, L.~{de Alfaro}, T.A. Henzinger, and M.~Stoelinga.
\newblock Resource interfaces.
\newblock In {\em EMSOFT}, LNCS 2855, pages 117--133. Springer, 2003.

\bibitem{CAFHMS06}
K.~Chatterjee, L.~de~Alfaro, M.~Faella, T.A. Henzinger, R.~Majumdar, and
  M.~Stoelinga.
\newblock Compositional quantitative reasoning.
\newblock In {\em QEST}, pages 179--188. IEEE Computer Society Press, 2006.

\bibitem{CDH08}
K.~Chatterjee, L.~Doyen, and T.A. Henzinger.
\newblock Quantitative languages.
\newblock In {\em Computer Science Logic (CSL)}, pages 385--400, 2008.

\bibitem{CHJ05}
K.~Chatterjee, T.A. Henzinger, and M.Jurdzinski.
\newblock Mean-payoff parity games.
\newblock In {\em LICS}, pages 178--187, 2005.

\bibitem{CJH03}
K.~Chatterjee, M.~Jurdzi{\'n}ski, and T.A. Henzinger.
\newblock Simple stochastic parity games.
\newblock In {\em CSL'03}, volume 2803 of {\em LNCS}, pages 100--113. Springer,
  2003.

\bibitem{CJH04}
K.~Chatterjee, M.~Jurdzi{\'{n}}ski, and T.A. Henzinger.
\newblock Quantitative stochastic parity games.
\newblock In {\em SODA'04}, pages 121--130. SIAM, 2004.

\bibitem{quasy}
Krishnendu Chatterjee, Thomas Henzinger, Barbara Jobstmann, and Rohit Singh.
\newblock Quasy: Quantitative synthesis tool.
\newblock In {\em TACAS}, 2011.
\newblock Accepted for publication.

\bibitem{ChatterjeeHJS10}
Krishnendu Chatterjee, Thomas~A. Henzinger, Barbara Jobstmann, and Rohit Singh.
\newblock Measuring and synthesizing systems in probabilistic environments.
\newblock In {\em CAV}, pages 380--395, 2010.

\bibitem{Church62}
A.~Church.
\newblock Logic, arithmetic and automata.
\newblock In {\em Proceedings International Mathematical Congress}, 1962.

\bibitem{ClarkeE81}
Edmund~M. Clarke and E.~Allen Emerson.
\newblock {Design and Synthesis of Synchronization Skeletons Using
  Branching-Time Temporal Logic}.
\newblock In {\em Logic of Programs}, pages 52--71, 1981.

\bibitem{CY90}
C.~Courcoubetis and M.~Yannakakis.
\newblock {Markov} decision processes and regular events.
\newblock In {\em ICALP 90}, volume 443 of {\em Lecture Notes in Computer
  Science}, pages 336--349. Springer-Verlag, 1990.

\bibitem{CousotC77}
Patrick Cousot and Radhia Cousot.
\newblock {Abstract Interpretation: A Unified Lattice Model for Static Analysis
  of Programs by Construction or Approximation of Fixpoints}.
\newblock In {\em POPL}, pages 238--252, 1977.

\bibitem{Cuning79}
R.A. Cuninghame-Green.
\newblock Minimax algebra.
\newblock In {\em Lecture Notes in Economics and Mathematical Systems}, volume
  166. Springer-Verlag, 1979.

\bibitem{deAlfaro97}
L.~de~Alfaro.
\newblock {\em Formal Verification of Probabilistic Systems}.
\newblock PhD thesis, Stanford University, 1997.

\bibitem{dA98}
L.~de~Alfaro.
\newblock Stochastic transition systems.
\newblock In {\em CONCUR 98}, pages 423--438, 1998.

\bibitem{DiscountingTheFuture}
L.~de~Alfaro, T.A. Henzinger, and R.~Majumdar.
\newblock Discounting the future in systems theory.
\newblock In {\em ICALP'03}, pages 1022--1037, 2003.

\bibitem{dAMRS07}
L.~de~Alfaro, R.~Majumdar, V.~Raman, and M.~Stoelinga.
\newblock Game relations and metrics.
\newblock In {\em LICS}, pages 99--108. IEEE Computer Society Press, 2007.

\bibitem{DGJP99}
J.~Desharnais, V.~Gupta, R.~Jagadeesan, and P.~Panangaden.
\newblock Metrics for labelled markov systems.
\newblock In {\em CONCUR 99: Concurrency Theory}, pages 258--273. Springer,
  1999.

\bibitem{DrosteGastin07}
M.~Droste and P.~Gastin.
\newblock Weighted automata and weighted logics.
\newblock {\em Theoretical Computer Science}, 380:69--86, 2007.

\bibitem{DrosteKR08}
M.~Droste, W.~Kuich, and G.~Rahonis.
\newblock Multi-valued {MSO} logics over words and trees.
\newblock {\em Fundamenta Informaticae}, 84:305--327, 2008.

\bibitem{Droste09}
M.~Droste, W.~Kuich, and H.~Vogler.
\newblock {\em Handbook of Weighted Automata}.
\newblock Springer Publishing Company, Incorporated, 2009.

\bibitem{FV96}
J.~Filar and K.~Vrieze.
\newblock {\em Competitive {Markov} Decision Processes}.
\newblock Springer-Verlag, 1996.

\bibitem{FHW80}
S.~Fortune, J.E. Hopcroft, and J.~Wyllie.
\newblock The directed subgraph homeomorphism problem.
\newblock {\em Theor. Comput. Sci.}, pages 111--121, 1980.

\bibitem{Gaubert97}
S.~Gaubert.
\newblock Methods and applications of (max, +) linear algebra.
\newblock In {\em STACS '97}, pages 261--282. Springer-Verlag, 1997.

\bibitem{glpk}
Glpk (gnu linear programming kit).
\newblock http://www.gnu.org/software/glpk/.

\bibitem{Haverk98}
Boudewijn~R. Haverkort.
\newblock {\em Performance of Computer Communication Systems: A Model-Based
  Approach}.
\newblock John Wiley \& Sons, Inc., New York, NY, USA, 1998.

\bibitem{Katz10}
G.~Katz and D.~Peled.
\newblock Code mutation in verification and automatic code correction.
\newblock In {\em TACAS 2010}, 2010.
\newblock To appear.

\bibitem{King01}
V.~King, O.~Kupferman, and M.~Y. Vardi.
\newblock On the complexity of parity word automata.
\newblock In {\em Foundations of Software Science and Computation Structures},
  pages 276--286, 2001.

\bibitem{KL07}
O.~Kupferman and Y.~Lustig.
\newblock Lattice automata.
\newblock In {\em VMCAI}, LNCS 4349, pages 199--213. Springer, 2007.

\bibitem{KNP09a}
M.~Kwiatkowska, G.~Norman, and D.~Parker.
\newblock {PRISM}: Probabilistic model checking for performance and reliability
  analysis.
\newblock {\em ACM SIGMETRICS Perform. Evaluation Review}, 2009.

\bibitem{NiebertPP08}
P.~Niebert, D.~Peled, and A.~Pnueli.
\newblock Discriminative model checking.
\newblock In {\em CAV}, 2008.

\bibitem{Parr97reinforcementlearning}
Ronald Parr and Stuart Russell.
\newblock Reinforcement learning with hierarchies of machines.
\newblock In {\em Advances in Neural Information Processing Systems 10}, pages
  1043--1049. MIT Press, 1997.

\bibitem{Pnueli89}
A.~Pnueli and R.~Rosner.
\newblock On the synthesis of a reactive module.
\newblock In {\em Proc. Symposium on Principles of Programming Languages (POPL
  '89)}, pages 179--190, 1989.

\bibitem{Puterman}
M.L. Puterman.
\newblock {\em {Markov} Decision Processes}.
\newblock John Wiley and Sons, 1994.

\bibitem{QueilleS82}
Jean-Pierre Queille and Joseph Sifakis.
\newblock {Specification and verification of concurrent systems in CESAR}.
\newblock In {\em Symposium on Programming}, pages 337--351, 1982.

\bibitem{Ramadg89}
P.~J.~G. Ramadge and W.~M. Wonham.
\newblock The control of discrete event systems.
\newblock {\em Proceedings of the IEEE}, 77:81--98, 1989.

\bibitem{wimmer10}
R.~Wimmer, B.~Braitling, B.~Becker, E.~M. Hahn, P.~Crouzen, H.~Hermanns,
  A.~Dhama, and O.~Theel.
\newblock Symblicit calculation of long-run averages for concurrent
  probabilistic systems.
\newblock In {\em QEST}, 2010.

\end{thebibliography}
\end{document}
